\newtheorem{theorem}{Theorem}
\begin{document}

\pagestyle{empty}

\title{Finite Horizon Throughput Maximization and Sensing Optimization in Wireless Powered Devices over Fading Channels}
%\vspace{-1cm}
\author{Mehdi Salehi Heydar Abad, Ozgur Ercetin %
\thanks{Mehdi~Salehi~Heydar~Abad and Ozgur~Ercetin are with the Faculty of Engineering and Natural Sciences, Sabanci University, Istanbul, Turkey. Email: mehdis@sabanciuniv.edu, oercetin@sabanciuniv.edu}
\thanks{This work was in part supported by EC H2020-MSCA-RISE-2015 programme under grant number 690893.}
}

\maketitle

\newtheorem{lemma}{Lemma}
\newtheorem{corollary}{Corollary}
\thispagestyle{empty}

\begin{abstract}
   Wireless power transfer (WPT) is  a promising technology that provides the network a way to replenish the batteries of the remote devices by utilizing RF transmissions. We study a class of harvest-first-transmit-later type of WPT policy, where an access point (AP) first employs RF power transfer to recharge a wireless powered device (WPD) for a certain period subjected to optimization, and then, the harvested energy is subsequently used by the WPD to transmit its data bits back to the AP over a finite horizon. A significant challenge regarding the studied WPT scenario is the time-varying nature of the wireless channel linking the WPD to the AP. We first investigate as a benchmark the offline case where the channel realizations are known non-causally prior to the starting of the horizon. For the offline case, by finding the optimal WPT duration and power allocations in the data transmission period, we derive an upper bound on the throughput of the WPD. We then focus on the online counterpart of the problem where the channel realizations are known causally. We prove that the optimal WPT duration obeys a time-dependent threshold form depending on the energy state of the WPD. In the subsequent data transmission stage, the optimal transmit power allocation for the WPD is shown to be of a fractional structure where at each time slot a fraction of energy depending on the current channel and a measure of future channel state expectations is allocated for data transmission. We numerically show that the online policy performs almost identical to the upper bound. We then consider a data sensing application, where the WPD adjusts the sensing resolution to balance between  the quality of the sensed data and the probability of successfully delivering it. We use Bayesian inference as a reinforcement learning method to provide a mean for the WPD in learning to balance the sensing resolution. We illustrate the benefits of the  Bayesian inference over the traditional approaches such as $\epsilon$-greedy algorithm using numerical evaluations. 

\end{abstract}
\begin{IEEEkeywords}
Bayesian inference, dynamic programming, multi-armed bandit, reinforcement learning, wireless power transfer.
\end{IEEEkeywords}

\section{Introduction}
\subsection{Motivation}
With the rapid increase in the number of battery-powered devices, energy harvesting (EH) technology provides a convenient window of opportunity to bypass the challenging, and in some cases infeasible task of replacing batteries. Traditional approaches in EH technologies harvest energy from natural resources such as wind, solar, etc. The inherent challenge of EH from natural resources is the stochastic nature of the EH process, which dictates the amount and availability of harvested energy that is beyond the control of system designers. Towards this end, wireless power transfer (WPT) \cite{6951347} is considered as a promising technology to provide the network administrators a leverage on replenishing the remote devices for proper network operations, by utilizing the RF signals as a mean to transfer power to WPDs.

In wireless powered communication networks (WPCNs) \cite{6678102,8048675,7866871}, WPT occurs in the down-link (DL) to replenish the battery of WPDs which in turn is used for information transmission (IT) in the up-link (UL). A fundamental question inherited in WPCNs is the optimum duration for WPT period and power allocation in the IT period. In \cite{6678102}, a hybrid access point (HAP) transmits power to a set of WPDs in the DL and collects their information using time-division-multiple-access (TDMA) in the UL. The aggregate throughput of the network is maximized by optimally determining the duration of WPT and UL time allocations for the WPDs. \cite{8048675} studies the effect of user cooperation in a two-user WPCN to maximize a weighted sum throughput by optimizing the beamforming vector in the WPT period, WPT duration, and data transmission durations. In \cite{7866871}, for a wireless powered underlay cognitive radio network in the presence of a primary user (PU), sum throughput of the secondary users (SUs) are maximized while meeting a constraint on the maximum average interference to the PU.

The above works, \cite{6678102,8048675,7866871}, perform a single-time-slot optimization assuming that the channel stays constant and all the harvested energy in a slot is totally used in the same time slot. Differently, \cite{7676282} assumes an infinite horizon throughput maximization problem where the harvested energy is allowed to be used in later times. It was shown that this strategy significantly improves the throughput albeit having high computational complexity. In \cite{7492928}, the half-duplex (HD) setting of \cite{7676282} is extended to consider a full-duplex (FD) scenario where the energy and information transfer occurs simultaneously and it was shown that the performance is significantly improved compared to the HD setting. \cite{7996351} studies a finite-horizon throughput maximization where a non-orthogonal multiple access (NOMA) is used to simultaneously decode information from multiple devices when the CSI is known non-causally. 

In the above works, it is assumed that in a single WPT instance, i.e., transmission of energy in the DL and reception of information in the UL, the channel state stays constant. However, in practice, this assumption is usually not valid. In this work, we study the problem of finite horizon throughput maximization, where both WPT and IT period is exposed to multiple random realizations of channel. The objective is to judiciously determine the optimal WPT duration and power allocations in the IT period. We emphasize that although the power allocation problem over finite horizon has been addressed previously, the joint problem of WPT duration and power allocation optimization has not been addressed.  The channel state information (CSI) is available causally and only in the IT period. The availability of causal CSI, makes the problem investigated here challenging, since any decision at any time slot has a cascading effect on the future outcomes. 

We first consider the offline problem by assuming that the CSI is available non-causally. For the offline case, we obtain closed form expressions to find the optimal WPT duration and power allocation in the IT period. We use the insights gained from the offline case, to develop an optimal online policy that maximizes the expected finite horizon throughput by optimally determining the WPT duration and power allocation in the IT period. Specifically, we formulate the problem of optimal WPT duration using the theory of stopping times. A stopping time is a random variable whose  value maximizes a certain property of interest in a stochastic process. We show that there exist a  time-dependent threshold on the energy level of the WPD in which it is optimal to stop WPT and start the IT period. Then, we show that the optimal power allocation in the IT period follows a fractional structure in which the WPD at each time slot allocates a fraction of its energy that depends on the current channel state as well as a specific measure of future channel expectations. 

We then consider a data sensing application, where the WPD is able to determine the sensing resolution of the data to be sent to the AP for further processing.  A high resolution data increases the performance of the application at the AP which depends on the resolution of the data transmitted by WPD. Meanwhile, a high resolution sensing setting produces more bits compared to a lower resolution setting which compromises the performance of the WPD in terms of successfully delivering the data. Therefore, an optimal sensing resolution is required to  balance the quality of the sensed data and the probability of successfully delivering it. We use Bayesian inference as a reinforcement learning method to provide a mean for the WPD in learning to balance the sensing resolution. We illustrate the benefits of the  Bayesian inference over the traditional approaches such as $\epsilon$-greedy algorithm using numerical evaluations.

\subsection{Contributions}
The contributions of the paper are summarized as follows:
\begin{itemize}
\item We formulate the problem of finite throughput maximization in a WPCN. We allow the finite horizon to span over multiple time slots where CSI changes randomly over time and it is available only causally at the transmitter during the IT period.
\item For the offline problem, where CSI is known non-causally, we derive a closed form expressions and enable a tractable framework to optimize both the WPT duration and power allocation in the IT period.
\item For the offline counterpart, we show that the optimal power allocations have a fractional structure depending on the current channel state as well as future channel states. 
\item Motivated by the results obtained from the offline problem, we formulate the online problem by assuming that the CSI is available only causally. 
\item We show that the optimal WPT duration for the online case has a time dependent threshold structure on the available energy of the WPD. We provide an easy to implement method to numerically calculate the thresholds.
\item Similar to the offline case, we show that the optimal power allocation for the online counterpart follows a fractional structure. The WPD allocates a fraction of its available energy in each time slot. Unlike the offline case, optimal fractions in the online case depends on the current channel state and a measure of the future channel state expectations.
\item We then consider a sensing application where the WPD uses a Bayesian framework to learn to determine the resolution of the sensing to balance the quality of the sensed data and the probability of successfully delivering it. We show that the Bayesian frame-work converges much faster, by judiciously exploring in the action space of the problem, than its classic counterpart $\epsilon$-greedy algorithm.
\end{itemize}
\subsection{Related Work}
\label{sec:related}

WPCN has been studied in the literature under different settings . \cite{7417596} studies a heterogeneous WPCN with the presence of EH and non-EH devices to find out how the presence
of non-harvesting nodes can be utilized to enhance the network
performance, compared to pure WPCNs. In \cite{7100855}, problem of throughput maximization in the presence of an EH relay is studied where the relay cooperatively help the source node in relaying its messages. The outage problem for a three node WPCN is analyzed in \cite{7018086,7248986} where both source and relay harvest energy for a certain duration, and then the source transmits to destination by using the relay. User cooperation is also studied in multiple works \cite{8048675,7037009,7500446} to improve the performance of the WPCN by exploiting the cooperative diversity. Multiple works also studied the WPCN in the context of cloud computing \cite{1708-08810,7417552,8234686,7997477}. Throughput maximization for WPCN is studied in \cite{6678102, 7676282,7492928,7996351}. Per time slot throughput maximization is studied in \cite{6678102}. By allowing the storage of the energy in a battery by the WPD, \cite{7676282} studies infinite horizon throughput maximization in HD mode and the results are extended to FD mode in \cite{7492928}. By adopting a NOMA strategy and under non-causal CSI, \cite{7996351} studies the problem of finite horizon throughput maximization.

Finite horizon throughput maximization has been extensively addressed from communication perspective in the literature for non-RF EH techniques. For example,  \cite{5992841} aims at maximizing the finite horizon throughput by dynamically adjusting the transmission power in an offline setting where CSI and the EH information (EHI) is non-causally available at the transmitter for the duration of the deadline. Packet transmission time minimization over a finite horizon with non-causal EHI and a static channel is studied in \cite{6094139}. However, in practice, the finite horizon spans over multiple time slots, and the CSI and EHI are not usually available. For time varying scenarios where EHI or CSI (or both) are available only causally, the problem needs to be solved dynamically. In \cite{6897968,7008488,7032105,7865904} under different EHI and CSI assumptions, the problem of finite horizon throughput maximization is formulated as a dynamic program (DP) and the optimal policy is evaluated by numerically solving the DP.  The solution is later stored in the devices as a {\em look-up table}. However, the DP solutions are computationally expensive, and they require large memory space to store the solutions, which is usually prohibitive for resource-constrained IoT devices. Moreover,  calculating and disseminating the optimal look up tables in a network consisting of large number of WPDs is inherently challenging and introduces large overheads \cite{7736112}. Finally, the complexity of the numerical solutions increase exponentially with respect to the number of states in the DP formulation. Recently, \cite{7959595} studied the problem of energy efficient scheduling for a non-RF EH over a finite horizon by developing a low complexity online heuristic policy that is built upon the offline solution and it can achieve close performance with respect to the offline policy. However, albeit the good performance, it is not evident how the algorithm would incorporate the optimal duration of the WPT period. 

In this work, we investigate the problem of finite horizon throughput maximization in a WPCN where an WPD harvests energy from WPT of the AP and then allocates the harvested energy in the subsequent time slots to transmit its data. Unlike the previous works, we consider a scenario where the CSI evolves randomly over the duration of the deadline, and CSI is only causally available at the transmitter which necessitates an online optimization framework. We avoid the complexity of the tabular methods (such as value iteration algorithm \cite{sutton}) by deriving closed form solutions for the optimal WPT duration and power allocations in the IT period. We show how the simple closed form expressions can be used to address a sensing application where the utility of the sensed data is captured by its resolution and probability of successfully delivering it. We address the problem in a reinforcement learning framework, where the optimum sensing resolution is learned by the WPD in a sequence of actions and observations. Finally, we conduct extensive simulations to verify our analytical findings.

\subsection{Outline}
The paper is organized as follows: In Section \ref{SystemModel}, we formally present the system model and all relevant assumptions. In Section \ref{PF}, we formulate the problem of finite horizon throughput maximization. In Section \ref{sec:optimal_off}, we provide an upper bound on the maximum achievable throughput by assuming non-causal information. In Section \ref{sec:optimal_on}, we solve the online counter-part of the problem by assuming only causal information. In Section \ref{sec:AL}, we address the sensing application and in Section \ref{numres}, we provide Monte-Carlo simulations to verify our findings. Finally, we conclude the paper in Section \ref{CON}.

\section{System Model}\label{SystemModel}
 \begin{figure*}[ht]
  \centering
    \includegraphics[scale=.5]{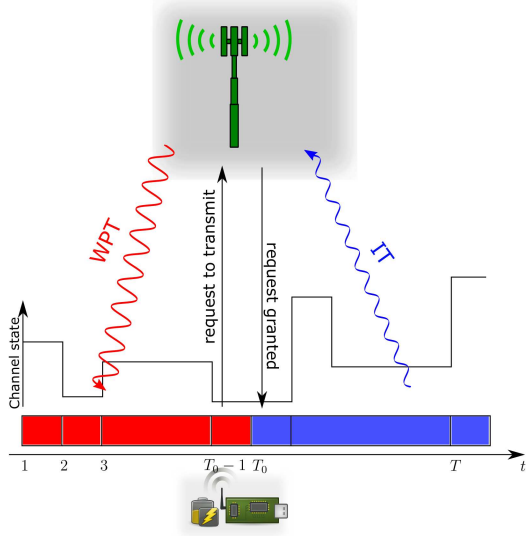}
		  \caption{System model.}
			\label{fig:offload}
\end{figure*}
We consider a point-to-point communication wireless channel where an WPD sends data to an AP by dynamically allocating power as shown in Figure \ref{fig:offload}. The AP uses WPT to replenish the battery of the WPD. The WPT and information transmission (IT) periods are non-overlapping in time, assuming a half-duplex transmission scenario. The WPD first harvests energy for a certain duration and then starts transmitting data to the AP. The duration of EH and IT periods is governed by the channel gain process which jointly affects the amounts of the harvested energy and transmitted data. We assume a discrete time scenario over a finite horizon. The time is slotted $t = 1,\ldots,T$ and $T<\infty$ denotes the frame length in units of slots. Let $g(t)$, $E_h(t)$ be the channel gain, and the amount of harvested energy at time slot $t$. Specifically, the amount of harvested energy at time slot $t$ is available at the beginning of slot $t+1$. The wireless channel is modeled as a multi state independent and identically distributed (iid) random process with $N$ levels.  The channel gain remains constant for a duration of a  time slot but changes  randomly from one time slot to another.  Let $g(t)\in \{g_1,\ldots,g_N\}$ be the channel power gain at slot $t$. We set $\mathds{P}(g(t)=g_n)=q_n$\footnote{Note that $g_n$'s can be obtained by discretizing a continuous time channel process.}. The WPD only has causal CSI and only during the IT period. 

The AP transmits a power beacon of $P$ watts over the wireless channel for a duration of $T_0-1$ time slots. The parameters that depend on the slot duration are normalize by the duration of a slot, and thus, we will refer to power and energy interchangeably. Assuming channel reciprocity, the amount of energy harvested by the WPD at time $t$ is $E_h(t)=\eta g(t)P$,
where $\eta$ is a constant representing the efficiency of the energy harvesting process\footnote{Note that $\eta$ in practice is a function of the received power and cannot be assumed to be a constant. However, assuming a variable $\eta$ does not change the results of the paper. Thus, for ease of presentation, we assume that $\eta$ is constant.}. The energy state of the WPD at time slot $t$ is denoted by $E(t)$. Let us denote $e_n = \eta g_n P$ as the amount of harvested energy when the channel state is at level $n$.

At time slot $t\geq T_0$, the WPD transmits with power $p(t)$, and the received power at the AP is $p(t)g(t)$.  In order to develop a tractable analytical solution, we assume a widely used empirical transmission energy model as in \cite{6574874,7880703,4674675,4357594,8403495,1801-03668}. Specifically, the instantaneous rate of transmitting with power $p(t)$ when the channel gain is $g(t)$ is calculated by
\begin{align}
r(t) =\sqrt[m]{\frac{p(t)g(t)}{\lambda}}  \label{eq:monomial}
%\mathcal{E}(l,g(t)) = \frac{\lambda l^m}{ g(t)},
\end{align}
where $\lambda$ denotes the energy coefficient incorporating the effects of bandwidth and noise power and $m$ is the monomial order determined by the adopted coding scheme \cite{1801-03668}. Figure \ref{fig:monomial} \cite{1801-03668}, compares the actual transmission rate with the monomial model described in \eqref{eq:monomial}. The approximated energy rate model, although may not be general for all cases, provides closed-form solutions for a challenging dynamic problem and gives insights to a practical and emerging problem.

 \begin{figure}[ht]
  \centering
    \includegraphics[scale=.15]{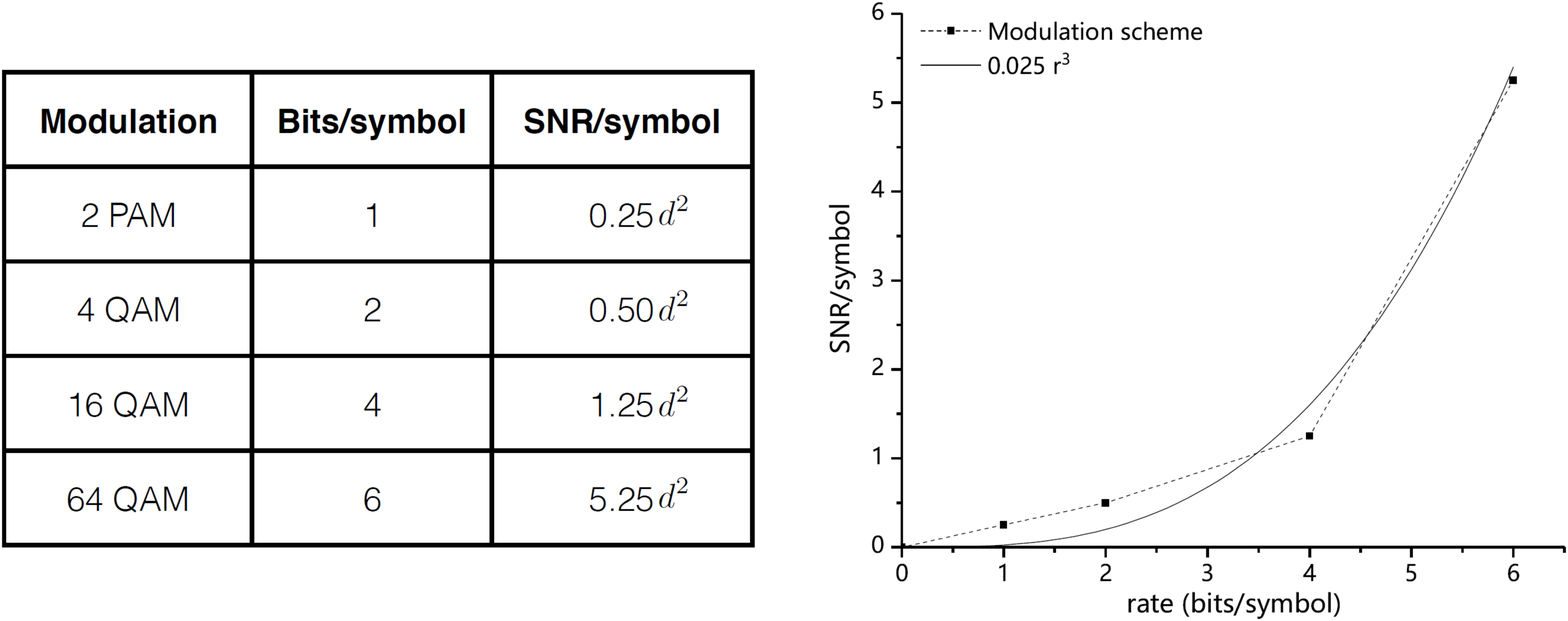}
		  \caption{The comparison of monomial and actual transmission rate and required signal-to-noise (SNR) ratio per symbol for $m=3$ and $\lambda = 0.025$ as given in \cite{1801-03668}. $d$ represents the minimum distance between signal points.}
			\label{fig:monomial}
\end{figure}

We consider the decentralized implementation. At the beginning of each slot, the WPD has the opportunity to inform the AP to stop WPT and begin IT period.  The objective is to judiciously determine the optimal WPT period duration, $T_0$,  and optimal power allocation in the IT period, $p(t)$ for $t=T_0,\ldots,T$, to maximize the finite horizon throughput. In Figure \ref{fig:Bevol}, we illustrate a sample realization of the battery of the WPD. The time frame has 10 time slots of $1ms$. The WPD accumulates energy until $t=2$. At $t=3$, since the available energy is larger than the threshold, the WPT period is stopped and the IT period began. 

\begin{figure}[ht]
  \centering
    \includegraphics[scale=.6]{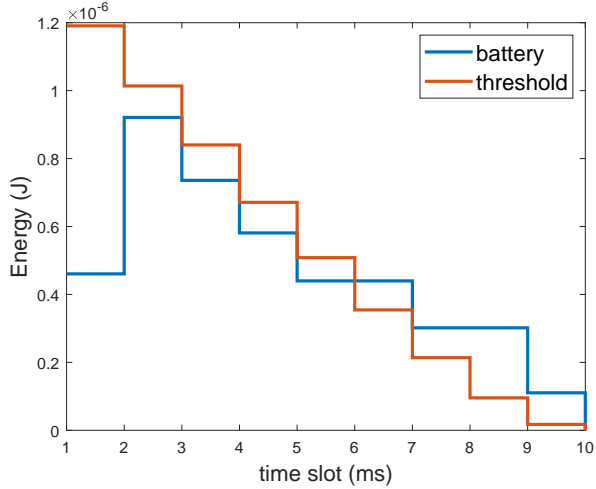}
		  \caption{An illustrative example of the battery evolution, $E(t)$, where $T=10$.}
			\label{fig:Bevol}
\end{figure}

\section{Problem Formulation}
\label{PF}
%In this section, we derive an {\em online} low complexity algorithm that aims at maximizing the amount of bits that can be transmitted over any given deadline $T$.

In this section, we formulate a joint optimization problem that finds the optimal trade-off between the EH and offloading periods, and the dynamic control of transmission power during the offloading period. 
%
%Note that the WPD first needs to find an optimal stopping time for EH. When the optimal amount of energy has been harvested by the WPD, it will start the offloading process. Then, the WPD needs to optimally allocate its harvested energy to be able to offload the maximum amount of data until the deadline. 
More specifically, we aim at solving the following optimization problem.
\begin{align}
\max_{T_0,\{p(t)\}^{T}_{t=T_0}} &\sum^{T}_{t=T_0} \mathbf{\sqrt[m]{\frac{g(t)p(t)}{\lambda}}} \label{opt_problem}\\
&p(t)\leq E(t),\hspace{0.5cm} t =T_0,\ldots,T,\label{constraint1}\\
&E(t+1) = E(t) + E^h(t),\hspace{0.5cm} t =1,\ldots,T_0-1 \label{constraint2}\\
&E(t+1) = E(t)-p(t),\hspace{0.5cm} t =T_0,\ldots,T.\label{constraint3}
\end{align}
Note that the objective function \eqref{opt_problem} is the total number of transmitted bits in the offloading period, \eqref{constraint1} ensures that the consumed energy does not exceed the available energy, \eqref{constraint2} and \eqref{constraint3} are the battery dynamics in the EH and offloading periods, respectively.  We first solve the offline version of the optimization problem by assuming that the channel gains are available prior to the optimization. Using the insights from the offline problem, we will design an optimal online policy, where the channel gains are only available causally.

\section{Optimal Offline Policy}
\label{sec:optimal_off}
We consider the offline counterpart of the optimization problem in (\ref{opt_problem}). Thus, we assume that values of $g(t)$ are known non-causally for $t = 1,\ldots,T$. Assuming that the optimal value of $T_0$ is given, we first aim at optimizing the power allocation in the IT period. We are interested in maximizing the following function

\begin{align}
\max_{p(t)} & \sum^{T}_{t=T_0} r(t) \nonumber\\
&0\leq p(t)\leq E(t).\nonumber
\end{align}

In Theorem \ref{off_optimal}, we show that the optimal policy, that maximizes the total number of bits transmitted bits in the IT period, allocates at each time slot a fraction of the available energy which depends on the channel realizations.

% We denote  the total number of bits that are transmitted in $T-T_0$ time by $R(\alpha(T_0),\ldots\alpha(T))$ and it has the following formula:
% \begin{align}
% R(\alpha(T_0),\ldots\alpha(T)) = \sum^{T}_{t=T_0} \mathbf{\sqrt[m]{\alpha(t)g(t)\prod^{t-1}_{\tau=T_0}(1-\alpha(\tau))}}\label{off_dynamic}
% \end{align}

% In the last time slot, $T$, the WPD utilizes all its energy, i.e., $\alpha(T)=1$. Note that due to special form of \eqref{off_dynamic}, $\frac{\partial R(\alpha(T_0),\ldots\alpha(T))}{\partial \alpha(t)}$ only depends on $\alpha(t+1),\ldots,\alpha(T)$. Hence, we start the optimization at $t=T-1$ and back propagate the results until $\alpha(T_0)$ is optimized. We have:

% \begin{align}
% \frac{\partial R(\alpha(T_0),\ldots\alpha(T-1))}{\partial \alpha(T-1)} =\frac{1}{m} \sqrt[m]{\prod^{T-2}_{\tau=T_0}(1-\alpha(\tau))}\left[g(T-1)^{\frac{1}{m}}\alpha(T-1)^{\frac{1}{m}-1}-g(T)^{\frac{1}{m}}(1-\alpha(T-1))^{\frac{1}{m}-1}\right].
% \end{align}
% By solving $\frac{\partial R(\alpha(T_0),\ldots\alpha(T-1))}{\partial \alpha(T-1)}=0$, the optimal value of $\alpha(T-1)$ is calculated by:
% \begin{align}
% \alpha^{*}(T-1) = \frac{g(T-1)^{\frac{1}{m-1}}}{g(T-1)^{\frac{1}{m-1}} + g(T)^{\frac{1}{m-1}}}
% \end{align}
% By recursively evaluating $\frac{\partial R(\alpha(T_0),\ldots\alpha(T))}{\partial \alpha(t)}$ for $t = T-2,T-3,\ldots,T_0$ and solving for $\alpha(t)$ the following results are achieved:

\begin{theorem}
\label{off_optimal}
For a given $T_0$ and realizations of $g(t)$ for $t=1,\ldots,T$, the optimal dynamic power allocation for the offline problem is calculated by 
\begin{align}
p^*(t) = \frac{g(t)^{\frac{1}{m-1}}}{g(t)^{\frac{1}{m-1}} + G(t+1)^{\frac{1}{m-1}}} E(t)\label{off_alfa}
\end{align}
where
\begin{align}
G(t)=&\left\{
\begin{array}{ll}
\left[g(t)^{\frac{1}{m-1}} + G(t+1)^{\frac{1}{m-1}}\right]^{m-1},& \text{if}\   t\leq T\\
0,& \text{if}\  t> T
\end{array}
\right.,\label{I}
\end{align}
and the maximum number of transmitted bits is calculated as
\begin{align}
\sum^T_{t=T_0} r^*(t) = \sqrt[m]{\frac{E(T_0)}{\lambda} G(T_0)}
\end{align}
\end{theorem}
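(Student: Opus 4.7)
My plan is to attack the theorem by backward induction on a dynamic-programming reformulation of the offline problem. For a fixed realization of channel gains, define the value function
\begin{align*}
V(t,E) = \max \sum_{s=t}^{T} \sqrt[m]{g(s)\, p(s)/\lambda}
\end{align*}
subject to $p(s) \geq 0$ and $\sum_{s=t}^{T} p(s) \leq E$. Then $V$ satisfies the Bellman recursion
\begin{align*}
V(t,E) = \max_{0 \leq p \leq E}\Bigl\{ \sqrt[m]{g(t)\, p/\lambda} + V(t+1,E-p) \Bigr\},
\end{align*}
with terminal condition $V(T+1,E) = 0$. The target throughput is then simply $V(T_0,E(T_0))$.

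The ansatz I intend to verify is $V(t,E) = \sqrt[m]{E \cdot G(t)/\lambda}$, with $G(t)$ defined by the recursion in \eqref{I}. The base case is immediate, since $G(T+1) = 0$. For the inductive step, substituting the hypothesis reduces the right-hand side of the Bellman operator to a single-variable concave maximization (strict concavity follows from $m>1$, which is the relevant regime for the monomial rate model). The first-order condition rearranges to
\begin{align*}
\frac{p}{E-p} = \left(\frac{g(t)}{G(t+1)}\right)^{1/(m-1)},
\end{align*}
which is algebraically equivalent to the claimed fractional form \eqref{off_alfa}. Nonnegativity of $g(t)^{1/(m-1)}$ and $G(t+1)^{1/(m-1)}$ automatically places this critical point in $[0,E]$, so it is the unique maximizer and the inequality constraint is inactive in the interior.

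Plugging $p^*(t)$ back into the Bellman operator, each of the two summands factors out a common $[g(t)^{1/(m-1)} + G(t+1)^{1/(m-1)}]^{-1/m}$, and after collection the expression simplifies to $\sqrt[m]{E/\lambda}\cdot [g(t)^{1/(m-1)} + G(t+1)^{1/(m-1)}]^{(m-1)/m}$. By the definition of $G(t)$ in \eqref{I}, this bracket raised to $(m-1)/m$ is exactly $G(t)^{1/m}$, so the right-hand side equals $\sqrt[m]{E \cdot G(t)/\lambda}$. This closes the induction, and evaluating at $t = T_0$ delivers the maximum-throughput expression claimed in the theorem.

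The main obstacle is the algebraic bookkeeping that makes the induction close: the exponents $1/(m-1)$, $1/m$, and $(m-1)/m$ must align so that the sum of the two $m$-th roots collapses back into a single $m$-th root whose argument has precisely the structure of the $G(t)$ recursion. Everything else is routine calculus (concavity, first-order conditions, checking feasibility). A minor sanity check I will include is the boundary slot $t=T$: then $G(T+1) = 0$ forces $p^*(T) = E(T)$, recovering the natural ``spend all remaining energy in the last slot'' rule, and $G(T) = g(T)$ makes the throughput formula collapse to the standard monomial rate.
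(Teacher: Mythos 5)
Your proposal is correct and follows essentially the same route as the paper: a backward induction in which the optimal tail value is conjectured to be $\sqrt[m]{E\,G(t)/\lambda}$, each step reduces to a single-variable concave maximization whose first-order condition yields the fractional allocation, and the exponent bookkeeping collapses the sum of two $m$-th roots back into the $G(t)$ recursion. The only (cosmetic) difference is that you phrase the induction via a Bellman value function with terminal condition $G(T+1)=0$, whereas the paper starts from the explicit two-slot problem at $t=T-1$ and a Lagrangian; your treatment of the base case is in fact slightly cleaner.
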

\begin{proof}
The proof is given in Appendix \ref{proof:offline}.
\end{proof}
Thus, the offline optimization problem becomes:
\begin{align}
\max_{T_0} & \sqrt[m]{\frac{E(T_0)}{\lambda} G(T_0)} \\
&2\leq T_0\leq T.\nonumber
\end{align}

The above maximization problem has only one integer variable and hence, the optimal value for $T_0$ can be easily calculated numerically.

\section{Optimal Online Policy}
\label{sec:optimal_on}
 Note that, in the online case, $g(t)$ is only available causally. Therefore, the optimization problem in \eqref{opt_problem}-\eqref{constraint3} cannot be solved using offline optimization tools and an online algorithm is required for its solution. A common approach to solve similar problems is to use dynamic programming (DP) to find the solution numerically, and store the optimal decisions in a look-up table for the WPD. However, solving a DP and storing the result is prohibitive for resource constrained WPDs. In the following, we extend the insights gained in the offline case to the online counterpart of the optimization problem in \eqref{opt_problem}. 

At each time slot $t\geq T_0$, the WPD allocates a fraction of its remaining energy and allocates $p(t)=\alpha(t)E(t)$ as its transmit power. Hence, the optimization problem converts to:
\begin{align}
\max_{T_0,\{\alpha(t)\}^{T}_{t=T_0}} &\sum^{T}_{t=T_0} \mathbf{\sqrt[m]{\frac{g(t)\alpha(t)E(t)}{\lambda}}} \label{on_problem}\\
&0\leq\alpha(t)\leq 1,\hspace{0.5cm} t =T_0,\ldots,T,\label{on_constraint1}\\
&E(t+1) = E(t) + E^h(t),\hspace{0.5cm} t =1,\ldots,T_0-1 \label{on_constraint2}\\
&E(t+1) = (1-\alpha(t))E(t),\hspace{0.5cm} t =T_0,\ldots,T.\label{on_constraint3}
\end{align}

%In particular, note that in order to solve a DP,  the state space of the problem, $(E(t),g(t))$, for $t=1,\ldots,T$ should be discretized. Let the battery state be discretized using $L$ levels. Then, the table containing optimal decisions would require $L\times N\times T$ entries. Moreover, let the action space $\alpha(t)\in [0,\ 1]$ be discretized using $M$ levels. Then, in solving the DP, for every possible state $(E(t),g(t))$, we have to evaluate $M$ possible values of $\alpha(t)$ and choose the optimum value. Hence, even though the numerical solution of DP is possible, it is practically prohibitive due to its computational complexity and high storage costs.  In the following, we aim to obtain a low complexity solution by deriving closed form representations of the optimal decision variables.

\subsection{Dynamic Energy Allocation}
In this section, we first optimize the values of $\alpha(t)$ by conditioning on $T_0$. Then using the obtained result, we will give a criteria for stopping the EH process, i.e., optimizing the value of $T_0$. 

Let the offloading period begin at $T_0$ and aim to maximize the throughput over $T-T_0$ time slots by using DP. The problem is recursively solved starting at the last time slot $T$, and the result is propagated by recursion until it reaches $t=T_0$.  We denote the instantaneous reward of choosing $\alpha(t)$ by $U_{\alpha(t)}(E(t),g(t)) $ which is the instantaneous number of bits transmitted to the AP, when the the amount of available energy at time $t$, is $E(t)$ and the channel power gain is at state $g(t)$. Thus,
\begin{align}
U_{\alpha(t)}(E(t),g(t)) = \sqrt[m]{\frac{\alpha(t)g(t)E(t)}{\lambda}}.
\end{align}

We denote the action-value function by $V_{\alpha}(E(t),g(t))$ which is equal to the instantaneous reward of choosing $\alpha(t)$ plus the expected number of bits that can be transmitted in the future. Hence, the action-value function evolves as,
\begin{align}
V_{\alpha(t)}(E(t),g(t))= U_{\alpha(t)}(E(t),g(t)) + \sum q_i V(E(t+1),g_i), 
\end{align}
where, $V(E(t),g(t))$ is the value function defined as,
\begin{align}
V(E(t),g(t)) =\max_{\alpha(t)} V_{\alpha(t)}(E(t),g(t)).
\end{align}

Note that at the last time slot, i.e., $t=T$, all the energy in the battery will be used for transmission, i.e., $\alpha(T) = 1$.  Thus, it follows that, 
\begin{align}
V(E(T),g(t))=&U_1(E(T),g(T)) \nonumber\\
=& \sqrt[m]{\frac{g(T)E(T)}{\lambda}}\nonumber\\
=&\sqrt[m]{\frac{g(T)(1-\alpha(T-1))E(T-1)}{\lambda}}.
\end{align}

% We start by calculating the optimal value of $\alpha(T-1)$ by conditioning on $g(T-1)$ and $E(T-1)$ as follows

% \begin{align}
% V(T)=U_1(T) =& \sqrt[m]{\frac{g(T)E(T)}{\lambda}}\nonumber\\
% =&\sqrt[m]{\frac{g(T)((1-\alpha(T-1))E(T-1)}{\lambda}}
% \end{align}
We maximize the action-value function at $t=T-1$ by optimizing $\alpha(T-1)$ as follows,
\begin{align}
V_{\alpha}(E(T-1),g(T-1)) =& U_{\alpha}(E(T-1),g(T-1)) + \sum q_i V((1-\alpha(T-1))E(T-1),g_i) \nonumber\\
=&\sqrt[m]{\frac{g(T-1)\alpha(T-1) E(T-1)}{\lambda}} + \sum q_i \sqrt[m]{\frac{g_i((1-\alpha(T-1))E(T-1))}{\lambda}}.\label{V_alpha_T_1}
\end{align}
It is easy to see that (\ref{V_alpha_T_1}) is concave with respect to $\alpha(T-1)$. Therefore, by differentiating (\ref{V_alpha_T_1}), the optimal $\alpha(T-1)$ can be calculated as follows:
\begin{align}
\alpha^*(T-1) = \frac{g(T-1)^{\frac{1}{m-1}}}{g(T-1)^{\frac{1}{m-1}}+Q(T-1)^{\frac{m}{m-1}}}\label{alfa1},
\end{align}
where,
\begin{align}
Q(T-1) = \sum q_i\sqrt[m]{g_i}.\label{Q1}
\end{align}
The corresponding value function can also be calculated as
\begin{align}
&V(E(T-1),g(T-1))= \sqrt[m]{\frac{E(T-1)}{\lambda}}\big(g(T-1)^{\frac{1}{m-1}} + Q(T-1)^{\frac{m}{m-1}}\big)^\frac{m-1}{m}.\label{V1}
\end{align}

In a similar manner as above, we can recursively calculate the optimal $\alpha(t)$ for $t=T-2,\ldots,T_0$. The result is summarized in the following theorem. 

\begin{theorem}
\label{optimal}
For any $t = T-1, \ldots,T_0$, the optimal decision is to choose
\begin{align}
\alpha^*(t) = \frac{g(t)^{\frac{1}{m-1}}}{g(t)^{\frac{1}{m-1}}+Q(t)^{\frac{m}{m-1}}},\label{alfa_k}
\end{align}
where
\begin{align}
Q(t) = \sum q_i \big(g_i^{\frac{1}{m-1}} + Q(t+1)^{\frac{m}{m-1}}\big)^{\frac{m-1}{m}} \label{Q_k}.
\end{align}
The corresponding value function is 
\begin{align}
V(E(t),g(t)) = \sqrt[m]{\frac{E(t)}{\lambda}}\big(g(t)^{\frac{1}{m-1}} + Q(t )^{\frac{m}{m-1}}\big)^\frac{m-1}{m}\label{V_k}
\end{align}
\end{theorem}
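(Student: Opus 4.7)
The plan is to proceed by backward induction on $t$, starting from $t=T-1$ which has already been established in equations \eqref{alfa1}--\eqref{V1} (this serves as the base case, together with the degenerate $V(E(T),g(T))$).

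For the inductive step, I would fix $t<T-1$ and assume as the induction hypothesis that for all channel states $g_i$ and any energy level $E$,
\begin{align*}
V(E,g_i) = \sqrt[m]{\frac{E}{\lambda}}\bigl(g_i^{\frac{1}{m-1}} + Q(t+1)^{\frac{m}{m-1}}\bigr)^{\frac{m-1}{m}}.
\end{align*}
Plugging this into the Bellman recursion for the action-value function with the battery transition $E(t+1) = (1-\alpha(t))E(t)$, I would obtain
\begin{align*}
V_{\alpha(t)}(E(t),g(t)) = \sqrt[m]{\frac{E(t)}{\lambda}}\left[\bigl(\alpha(t)g(t)\bigr)^{\frac{1}{m}} + (1-\alpha(t))^{\frac{1}{m}}\sum_i q_i \bigl(g_i^{\frac{1}{m-1}} + Q(t+1)^{\frac{m}{m-1}}\bigr)^{\frac{m-1}{m}}\right].
\end{align*}
The key observation is that the sum in the bracket is exactly $Q(t)$ as defined in \eqref{Q_k}, so the expression reduces to $\sqrt[m]{E(t)/\lambda}\bigl[(\alpha(t)g(t))^{1/m} + (1-\alpha(t))^{1/m} Q(t)\bigr]$.

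Since both $\alpha^{1/m}$ and $(1-\alpha)^{1/m}$ are concave in $\alpha\in[0,1]$, the bracketed expression is concave in $\alpha(t)$, so setting its derivative to zero yields an interior maximum. Differentiating and simplifying gives the first-order condition $g(t)^{1/m}\alpha(t)^{(1-m)/m} = Q(t)(1-\alpha(t))^{(1-m)/m}$, which after raising to the power $m/(m-1)$ and solving for $\alpha(t)$ produces the claimed formula \eqref{alfa_k}. Substituting $\alpha^*(t)$ back into the bracket and collecting terms by a common factor of $(g(t)^{1/(m-1)} + Q(t)^{m/(m-1)})^{-1/m}$ recovers the claimed form \eqref{V_k} of the value function, completing the induction.

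The main obstacle will be the bookkeeping of exponents in the final substitution step: verifying that $(\alpha^*(t)g(t))^{1/m} + (1-\alpha^*(t))^{1/m}Q(t)$ collapses to $\bigl(g(t)^{1/(m-1)} + Q(t)^{m/(m-1)}\bigr)^{(m-1)/m}$ requires careful manipulation, specifically writing $\alpha^*(t)$ and $1-\alpha^*(t)$ with a common denominator and recognizing that the numerator of each term becomes $g(t)^{1/(m-1)}$ and $Q(t)^{m/(m-1)}$ respectively, so that the sum telescopes to the desired power of the full denominator. Once this algebraic identity is verified at $t$, the induction carries all the way down to $t=T_0$, which gives the stated result.
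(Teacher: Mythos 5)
Your proposal is correct and follows essentially the same route as the paper's own proof: backward induction anchored at the $t=T-1$ base case of \eqref{alfa1}--\eqref{V1}, substitution of the inductive form of the value function into the Bellman recursion via $E(t+1)=(1-\alpha(t))E(t)$, a first-order condition in $\alpha(t)$, and back-substitution to recover \eqref{V_k}. The explicit concavity remark and the exponent bookkeeping you flag are exactly the (routine) steps the paper carries out in Appendix~\ref{proof:optimal}.
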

\begin{proof}
The proof is given in Appendix \ref{proof:optimal}.
\end{proof}

Theorem \ref{optimal} gives a framework to dynamically allocate energy at each time slot $t\geq T_0$. Instead of numerically solving the DP and storing it in a large look up table, WPD needs to just calculate and store an array of values with a maximum dimension of $T$. The closed form expressions derived in (\ref{alfa_k})-(\ref{V_k}) significantly simplify the procedure to optimize $T_0$. We will use these results to find an structure for the optimal stopping time problem in the subsequent section.

\subsection{Optimal Stopping time for the EH Process}

In the following, we derive the optimal stopping time for the EH process, i.e., optimizing $T_0$ in \eqref{opt_problem}-\eqref{constraint3}. Recall that the WPD accumulates energy up to some time $t$, and then stops the EH process to start offloading its task. Also, recall that during EH, the WPD is blind to the channel conditions. If the WPD stops the EH process at time $t$, then the expected number of bits that can be transmitted is
\begin{align}
\sum q_i V(E(t),g_i) &= \sum q_i \sqrt[m]{\frac{E(t)}{\lambda}}\big(g_i^{\frac{1}{m-1}} + Q(t)^{\frac{m}{m-1}}\big)^\frac{m-1}{m}\nonumber\\
& = \sqrt[m]{\frac{E(t)}{\lambda}}Q(t-1). \label{V_avg}
\end{align}
Note that  (\ref{V_avg}) follows from the definition of $Q(t)$ given in (\ref{Q_k}).

Let $J_t(E(t))$, $t=1,\ldots,T$ be the maximum expected number of bits that can be transmitted if the EH process is stopped at time $t$, and the amount of available energy is $E(t)$. At any time $t$, the WPD will either stop or continue the EH process. The optimal stopping time for the EH process can be formulated as
\begin{align}
\max_{t\leq T}\,\, J_t(E(t)),
\end{align}
where
\begin{align}
J_t(E(t)) = \max\left(\sqrt[m]{\frac{E(t)}{\lambda}}Q(t-1), \mathbb{E}(J_{t+1}(E(t+1))\bigg|E(t))\right).
\end{align}

The problem can be formulated as a DP and recursively solved for every possible $E(t)$ and $t$. Before proceeding, we need the following lemma.
\begin{lemma}
\label{lemma:Q_decreasing}
$Q(t)$, defined in (\ref{Q_k}) is a monotonically decreasing function in $t$.
\end{lemma}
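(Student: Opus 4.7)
The plan is to prove the claim by backward induction on $t$, leveraging the strict monotonicity in $x$ of the one-step map $f_i(x) = \bigl(g_i^{1/(m-1)} + x^{m/(m-1)}\bigr)^{(m-1)/m}$ that appears inside the expectation defining $Q(t)$. Concretely, I would first observe that for $m>1$ both exponents $m/(m-1)$ and $(m-1)/m$ are strictly positive, so each $f_i$ is strictly increasing on $[0,\infty)$. Since the recursion \eqref{Q_k} writes $Q(t) = \sum_i q_i f_i\bigl(Q(t+1)\bigr)$, any strict inequality $Q(t+1) > Q(t+2)$ propagates term-by-term to a strict inequality $Q(t) > Q(t+1)$.

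For the base case I would extend the definition by setting $Q(T)=0$, which is consistent with \eqref{Q1}: plugging $Q(T)=0$ into the recursion \eqref{Q_k} at $t=T-1$ gives $Q(T-1) = \sum_i q_i\bigl(g_i^{1/(m-1)}\bigr)^{(m-1)/m} = \sum_i q_i\sqrt[m]{g_i}$, matching \eqref{Q1}. Then $Q(T-1) > 0 = Q(T)$ is immediate since all $g_i>0$ and $\sum_i q_i = 1$.

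With the base case in hand, the inductive step is essentially the monotonicity remark above: assuming $Q(t+1) > Q(t+2)$, I apply $f_i$ (strictly increasing) to both sides to get $f_i(Q(t+1)) > f_i(Q(t+2))$ for every $i$, then take the convex combination with weights $q_i \geq 0$ to conclude $Q(t) > Q(t+1)$. Backward induction from $t=T-1$ down to $t=1$ then yields $Q(1) > Q(2) > \cdots > Q(T-1) > Q(T)=0$, which is the desired monotonicity.

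I do not anticipate a serious obstacle; the only subtlety is to keep the exponents straight so that one is sure both $x \mapsto x^{m/(m-1)}$ and $y \mapsto y^{(m-1)/m}$ are strictly increasing on $[0,\infty)$ for $m>1$, and to be careful that the induction is backward (in decreasing $t$), so ``monotonically decreasing in $t$'' means $Q$ grows as $t$ shrinks, consistent with the intuition that more remaining time slots in the IT period provide more opportunities to exploit good channels.
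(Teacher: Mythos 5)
Your proof is correct, but it takes a different route from the paper's. The paper does not use induction at all: it simply divides the recursion \eqref{Q_k} by $Q(t+1)$ and observes that
\begin{align*}
\frac{Q(t)}{Q(t+1)}=\sum_i q_i \Bigl( 1+ \frac{g_i^{\frac{1}{m-1}}}{Q(t+1)^{\frac{m}{m-1}}}\Bigr)^{\frac{m-1}{m}}>\sum_i q_i=1,
\end{align*}
i.e., the key property it exploits is that the one-step map $f_i(x)=\bigl(g_i^{1/(m-1)}+x^{m/(m-1)}\bigr)^{(m-1)/m}$ satisfies $f_i(x)>x$ for $g_i>0$, so that $Q(t)=\sum_i q_i f_i(Q(t+1))>Q(t+1)$ in a single step, with no reference to $Q(t+2)$. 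You instead use the weaker property that each $f_i$ is strictly increasing and propagate the inequality backward from the base case $Q(T-1)>Q(T)=0$. Both arguments are valid; the paper's is shorter and self-contained at each $t$, while yours has the minor advantage of explicitly establishing $Q(t)>0$ along the way (which the paper's ratio argument implicitly needs in order to divide by $Q(t+1)$), and of isolating the monotonicity of the recursion map, which is reused implicitly in Lemma \ref{lemma:Qfrac_decreasing}. Your closing remark about the direction of monotonicity is also consistent with the paper's usage: ``decreasing in $t$'' means $Q(t)>Q(t+1)$.
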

\begin{proof}
\begin{align}
\frac{Q(t)}{Q(t+1)} &= \frac{\sum q_i \big(g_i^{\frac{1}{m-1}} + Q(t+1)^{\frac{m}{m-1}}\big)^{\frac{m-1}{m}} }{Q(t+1)}\nonumber\\
&=\sum q_i \big( 1+ \frac{g_i^{\frac{1}{m-1}}}{Q(t+1)^{\frac{m}{m-1}}}  \big)^{\frac{m-1}{m}}>1.
\end{align}
It readily follows that $Q(t)>Q(t+1)$.
\end{proof}

Note that at $t=T$, the best strategy is to stop the EH process and start offloading, since otherwise no bits can be offloaded to the AP. Thus,
\begin{align}
J_T(E(T)) = \sqrt[m]{\frac{E(T)}{\lambda}} Q(T-1).
\end{align}

We continue the recursive evaluation at time slot $t =T-1$. We have,
\begin{align}
&J_{T-1}(E(T-1)) \nonumber\\
&= \max(\sqrt[m]{\frac{E(T-1)}{\lambda}} Q(T-2),\mathbb{E}(J_T(E(T))|E(T-1)))\nonumber\\
&= \max(\sqrt[m]{\frac{E(T-1)}{\lambda}} Q(T-2),\sum q_i\sqrt[m]{\frac{E(T-1)+e_i}{\lambda}} Q(T-1))
\end{align}
Since $Q(T-2)>Q(T-1)$ as proven in Lemma \ref{lemma:Q_decreasing}, if $E(T-1)\geq \gamma(T-1)$ , then 
\begin{align}
\sqrt[m]{\frac{E(T-1)}{\lambda}} Q(T-2)\geq \sum q_i\sqrt[m]{\frac{E(T-1)+e_i}{\lambda}} Q(T-1)),
\end{align}
where $\gamma(T-1)$ is the solution to the following equation
\begin{align}
\sum q_i \sqrt[m]{1+\frac{e_i}{\gamma(T-1)}} = \frac{Q(T-2)}{Q(T-1)}. \label{gamma_T_1}
\end{align}
Note that $\gamma(T-1)$ admits a unique solution because the left hand side of (\ref{gamma_T_1}) is a strictly decreasing function in $\gamma(T-1)$ and its range belongs to $(1,\ \infty)$. Also, from Lemma \ref{lemma:Q_decreasing}, we know that $\frac{Q(T-2)}{Q(T-1)}>1$. 
%Thus, (\ref{gamma_T_1}) admits a single solution for $\gamma(T-1)$.
Hence, it is optimal to stop the EH process at time $T-1$ if $E(T-1)\geq \gamma(T-1)$. This suggests that the optimal stopping times are governed by a time varying threshold type structure, where at any given time $t$, it is optimal to stop the EH process if $E(t)\geq\gamma(t)$. Before, proving this observation, we need the following lemma.
\begin{lemma}
\label{lemma:Qfrac_decreasing}
For any $k=1,\ldots,T-1$, we have
\begin{align}
\frac{Q(T-k-1)}{Q(T-k)}<\frac{Q(T-k)}{Q(T-k+1)}
\end{align}
\end{lemma}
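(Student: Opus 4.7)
The plan is to leverage the ratio formula for $Q(t)/Q(t+1)$ that was already derived inside the proof of Lemma \ref{lemma:Q_decreasing}, together with the monotonicity statement of that same lemma. Rewriting the target inequality as ``the ratio $Q(t)/Q(t+1)$ is strictly increasing in $t$'' turns it into a one-line monotonicity check, so the work reduces to confirming a sign.

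First, I would recall from the proof of Lemma \ref{lemma:Q_decreasing} the identity
\begin{align}
\frac{Q(t)}{Q(t+1)} \;=\; \sum_i q_i \left(1 + \frac{g_i^{\frac{1}{m-1}}}{Q(t+1)^{\frac{m}{m-1}}}\right)^{\frac{m-1}{m}}.\nonumber
\end{align}
Setting $t_1 = T-k-1$ and $t_2 = T-k$, so that $t_1 < t_2$ and hence $t_1+1 < t_2+1$, I would then apply Lemma \ref{lemma:Q_decreasing} to obtain $Q(t_1+1) > Q(t_2+1) > 0$, which in turn yields $Q(t_1+1)^{m/(m-1)} > Q(t_2+1)^{m/(m-1)}$ (since $m/(m-1)>0$ for $m>1$), and consequently
\begin{align}
\frac{g_i^{\frac{1}{m-1}}}{Q(t_1+1)^{\frac{m}{m-1}}} \;<\; \frac{g_i^{\frac{1}{m-1}}}{Q(t_2+1)^{\frac{m}{m-1}}}\nonumber
\end{align}
for every channel state $i$ (strict because $g_i>0$).

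Next, I would use that the map $x \mapsto (1+x)^{(m-1)/m}$ is strictly increasing on $[0,\infty)$ to conclude that each summand on the LHS below is strictly smaller than the corresponding summand on the RHS, and then take the $q_i$-weighted average:
\begin{align}
\sum_i q_i \left(1 + \frac{g_i^{\frac{1}{m-1}}}{Q(t_1+1)^{\frac{m}{m-1}}}\right)^{\frac{m-1}{m}}
\;<\;
\sum_i q_i \left(1 + \frac{g_i^{\frac{1}{m-1}}}{Q(t_2+1)^{\frac{m}{m-1}}}\right)^{\frac{m-1}{m}}.\nonumber
\end{align}
Substituting the ratio identity on both sides gives $Q(t_1)/Q(t_1+1) < Q(t_2)/Q(t_2+1)$, which is exactly the claim after replacing $t_1,t_2$ by $T-k-1,T-k$.

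There is no real obstacle here; the argument is essentially a monotonicity lift of Lemma \ref{lemma:Q_decreasing}. The only care needed is to keep track of the sign conventions (verifying $m>1$ so that both exponents $1/(m-1)$ and $(m-1)/m$ are positive, and noting $g_i>0$ and $q_i>0$ to make all the inequalities strict). If the paper implicitly allows $m=1$ one would need a separate trivial check, but under the standing assumption $m>1$ used throughout Sections \ref{sec:optimal_off}--\ref{sec:optimal_on} the proof above goes through directly.
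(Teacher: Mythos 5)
Your proof is correct and follows essentially the same route as the paper: both express the two ratios via the identity $Q(t)/Q(t+1)=\sum_i q_i\bigl(1+g_i^{1/(m-1)}/Q(t+1)^{m/(m-1)}\bigr)^{(m-1)/m}$ and then invoke the monotonicity of $Q$ from Lemma~\ref{lemma:Q_decreasing} to compare the summands termwise. Your version merely adds explicit bookkeeping on the signs of the exponents and the strict monotonicity of $x\mapsto(1+x)^{(m-1)/m}$, which the paper leaves implicit.
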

\begin{proof}
By using (\ref{Q_k}), we have
\begin{align}
\frac{Q(T-k-1)}{Q(T-k)} &= \frac{\sum q_i \big(g_i^{\frac{1}{m-1}} + Q(T-k)^{\frac{m}{m-1}}\big)^{\frac{m-1}{m}} }{Q(T-k)}\nonumber\\
&=\sum q_i \big(1 + \frac{g_i^{\frac{1}{m-1}}}{Q(T-k)^{\frac{m}{m-1}}} \big)^{\frac{m-1}{m}},
\end{align}
and,
\begin{align}
\frac{Q(T-k)}{Q(T-k+1)} &= \frac{\sum q_i \big(g_i^{\frac{1}{m-1}} + Q(T-k+1)^{\frac{m}{m-1}}\big)^{\frac{m-1}{m}} }{Q(T-k+1)}\nonumber\\
&=\sum q_i \big(1 + \frac{g_i^{\frac{1}{m-1}}}{Q(T-k+1)^{\frac{m}{m-1}}} \big)^{\frac{m-1}{m}}.
\end{align}
From Lemma \ref{lemma:Q_decreasing}, we have $Q(T-k)>Q(T-k+1)$ and thus the lemma holds.
\end{proof}

In the following theorem, we give the structure of the optimal stopping policy.

\begin{theorem}
\label{theorem:threshold}
At each time slot $t$, the optimal decision is to stop the EH process if $E(t)\geq \gamma(t)$, where $\gamma(t)$ is the solution to the following equation,
\begin{align}
\sum q_i \sqrt[m]{1+\frac{e_i}{\gamma(t)}} = \frac{Q(t-1)}{Q(t)}\label{gamma}
\end{align}
\end{theorem}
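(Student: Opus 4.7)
The plan is to proceed by backward induction on $t$, starting from the base case $t=T-1$ which is already carried out in the exposition preceding the theorem (the indifference equation (\ref{gamma_T_1}) is exactly (\ref{gamma}) specialized to $t=T-1$). For the inductive step I assume the threshold policy with thresholds $\gamma(t+1),\ldots,\gamma(T-1)$ is optimal for the subproblems starting at times $t+1,\ldots,T-1$, and I establish the same structure at time $t$ with $\gamma(t)$ given by (\ref{gamma}).

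Abbreviate the stopping value by $S_t(E) := \sqrt[m]{E/\lambda}\,Q(t-1)$ and the continuation value by $C_t(E) := \mathbb{E}[J_{t+1}(E+E^h(t))\mid E]$, so that $J_t(E)=\max(S_t(E),C_t(E))$. I would first verify that (\ref{gamma}) has a unique positive root $\gamma(t)$ and that the sequence $\{\gamma(t)\}$ is strictly decreasing in $t$: the LHS of (\ref{gamma}) is strictly decreasing in $\gamma(t)$ with range $(1,\infty)$; Lemma \ref{lemma:Q_decreasing} yields $Q(t-1)/Q(t)>1$, giving existence and uniqueness; and Lemma \ref{lemma:Qfrac_decreasing} shows this ratio is strictly increasing in $t$, forcing $\gamma(t)$ to be strictly decreasing in $t$.

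The two regimes then reduce to a single monotonicity comparison. For $E \geq \gamma(t)$, since $\gamma(t) > \gamma(t+1)$, every next state satisfies $E+e_i \geq \gamma(t) > \gamma(t+1)$, so the inductive hypothesis gives $J_{t+1}(E+e_i) = \sqrt[m]{(E+e_i)/\lambda}\,Q(t)$, whence $C_t(E) = \sum_i q_i \sqrt[m]{(E+e_i)/\lambda}\,Q(t)$. The inequality $S_t(E) \geq C_t(E)$ becomes
\begin{align*}
\frac{Q(t-1)}{Q(t)} \;\geq\; \sum_i q_i \sqrt[m]{1+e_i/E},
\end{align*}
whose RHS is strictly decreasing in $E$ and equals $Q(t-1)/Q(t)$ exactly at $E=\gamma(t)$, so stopping is strictly preferred for $E>\gamma(t)$. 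Conversely, for $E<\gamma(t)$ the same monotonicity gives $\sum_i q_i \sqrt[m]{(E+e_i)/\lambda}\,Q(t) > S_t(E)$; combined with the elementary one-sided bound $J_{t+1}(E') \geq \sqrt[m]{E'/\lambda}\,Q(t)$ (which holds because stopping is always available at $t+1$), this yields $C_t(E) \geq \sum_i q_i \sqrt[m]{(E+e_i)/\lambda}\,Q(t) > S_t(E)$, making continuation strictly optimal.

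The main subtlety I anticipate is that $J_{t+1}$ is only piecewise defined — above its threshold it takes the simple stopping form, below it takes a more intricate continuation form — so $C_t(E)$ has no single closed form on the subthreshold side. The trick that sidesteps this is the one-sided inequality $J_{t+1} \geq$ stopping value used above, which is enough to push $C_t$ past $S_t$ for $E<\gamma(t)$ without ever needing the explicit subthreshold expression for $J_{t+1}$. Lemma \ref{lemma:Qfrac_decreasing} plays the essential role of propagating the monotonicity $\gamma(t)>\gamma(t+1)$ through the induction, which in turn is what makes the ``all next states above $\gamma(t+1)$'' argument available on the suprathreshold side.
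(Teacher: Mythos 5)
Your proposal is correct and follows essentially the same route as the paper's proof: backward induction with base case $t=T-1$, using Lemma \ref{lemma:Q_decreasing} for existence/uniqueness of $\gamma(t)$ and Lemma \ref{lemma:Qfrac_decreasing} to get $\gamma(t)>\gamma(t+1)$, which guarantees all post-harvest states lie above $\gamma(t+1)$ so the continuation value collapses to $\sum_i q_i \sqrt[m]{(E+e_i)/\lambda}\,Q(t)$ and the comparison reduces to equation (\ref{gamma}). Your additional sub-threshold argument via the one-sided bound $J_{t+1}(E')\geq \sqrt[m]{E'/\lambda}\,Q(t)$ is a correct and worthwhile strengthening (it shows continuation is strictly optimal for $E<\gamma(t)$, a direction the paper's proof leaves unaddressed), but it does not change the underlying approach.
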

\begin{proof}
The proof is in Appendix \ref{proof_theorem:threshold}.
\end{proof}
The results established in Theorem \ref{optimal} and \ref{theorem:threshold} enables us to develop an online low complexity optimal algorithm that maximizes the expected throughput. The procedure is summarized in Algorithm \ref{alg}.

\begin{algorithm}
\caption{Online policy}\label{alg}
\begin{algorithmic}[1]
\State Initialize $Q(t)$ for $t= 0,\ldots,T-1$ using (\ref{Q_k}),
\State Initialize $\gamma(t)$ for $t = 1,\ldots,T-1$ using (\ref{gamma}),
\For{$t = 1:T$}
\If{$E(t)<\gamma(t)$}
\State continue the EH process
\Else 
\State $T_0 = t$,
\State Stop the  EH process,
\State Break
\EndIf
\EndFor
\For{$t = T_0:T$}
\State Calculate $\alpha(t)$ using (\ref{alfa_k}),
\State Transmit using $\alpha(t)E(t)$.
\EndFor
\end{algorithmic}
\end{algorithm}

\section{Optimal Sensing}
\label{sec:AL}
%Here, we ascend to a higher perspective than just studying a physical layer problem. In a higher level, we may be interested in jointly maximizing a fucntion in the domain of both pyysical and application layer. Let $U(L(t),r(L(t)) = Z_{AL}(L(t)) \cdot Z_{PHY}(r(L(t)))$. $Z_{AL}(L(t))$ is a function that  reflects the amount of utility gained by receiving $L(t)$ number of bits per framein the application layer. $Z_{PHY}(x)$ is a function that returns the utility of \emph{property} $x$ in the pysical layer. This ``property'' is reflected through $r(L(t))$ which can describe an event such as drop, delay, age
Thus far, we have considered a general problem of maximizing the throughput of a WPD over a finite horizon. However, in practice most WPDs will be sensors collecting data to provide situational awareness to the users \cite{wearable}. Hence, we next consider a related problem formulation, where we aim to maximize the overall utility of the system defined as a function of bits delivered successfully by the end of time horizon. For example, the WPD could  be an image sensor that transmits images to the AP tracking the eye movement, i.e., estimating the \emph{gaze} location of a person \cite{ishadow}. The accuracy of estimating the gaze depends on the number of pixels per frame. A gaze error varies from  $10-15$ pixels at $77$ pixels/frame to $0-3$ pixels at $1984$ pixels/frame \cite{Rostami}. Hence, providing a high resolution input data provides a higher utility. However, the increased utility in the application layer comes at a price of reduced chance of delivering the input data to the AP due to the finite time horizon and the dynamic nature of the wireless link. Hence, there exists an optimal trade-off in balancing the quality of input data and probability of delivering it successfully to the AP for processing.

The WPD is able to provide an input data at $K$ different \emph{quality} points where each quality point is described by the number of bits used, i.e., $L_k$ bits for $k = 1,\ldots,K$. Let the energy consumed for generating a type $k$ input data be $\mathcal{E}_k$ joules. If the WPD successfully delivers a type $k$ data, it receives a known utility of $Z(L_k)$, and zero otherwise. 
In other words, if the throughput of a transmission frame is larger than the packet size, then the packet is successfully delivered to the AP. We formulate the utility optimization as follows:

\begin{align}
\max_{L_k,T_0,\{\alpha(t)\}^{T}_{t=T_0}} &Z(L_k)\mathds{P}\Bigg(\sum^{T}_{t=T_0} \mathbf{\sqrt[m]{\frac{g(t)\alpha(t)E(t)}{\lambda}}}>L_k\Bigg) \label{extended_problem}\\
&0\leq\alpha(t)\leq 1,\hspace{0.5cm} t =T_0,\ldots,T,\label{const:1}\\
&E(t+1) = E(t) + E^h(t),\hspace{0.5cm} t =1,\ldots,T_0-1, \label{const:2}\\
&E(t+1) = (1-\alpha(t))E(t),\hspace{0.5cm} t =T_0,\ldots,T,\label{const:3}\\
&L_{min}\leq L_k \leq L_{max}.\label{const:4}
\end{align}
%\begin{align}
%\max_{\{L_k\}^{K}_{k=1}} Z(L_k) \mathds{E}(\chi_k) \label{opt:AL}
%\end{align}
The above optimization problem consists of three sub-problems; choosing the size of the input data $L_k$, determining the optimal WPT duration $T_0$, and optimal power allocations in the IT period $\alpha(t)$, $t=T_0,\ldots,T$. The online policy presented in Algorithm \ref{alg} has the maximum possible throughput on the average with respect to any other policy. Hence, it has a better probability of success compared to any alternative policy. Thus, the WPD uses Algorithm \ref{alg} to determine the optimal duration of WPT and power allocations in the IT period. It should be noted that the thresholds, that are used to determine the optimal WPT duration, need to be adjusted by adding $\mathcal{E}_k$ to all the thresholds to account for the energy consumption due to sensing the data. 

Let the event of successfully delivering a packet of $L_k$ bits be $\chi_k$. More specifically:
\begin{align}
\chi_k=\Bigg\{
\begin{array}{ll}
         1 & \mbox{if $\sum^{T}_{t=T_0} r(t)>L_k$},\\
        0  & \mbox{otherwise}.
        \end{array}\label{eq:pdp}
\end{align}

We rewrite the optimization problem of interest as follows: 
\begin{align}
\max_{\{L_k\}^{K}_{k=1}} Z(L_k) \mathds{E}(\chi_k) \label{opt:AL}
\end{align}
The WPD in the beginning of each transmission frame chooses a $L_k$ that optimizes the above optimization problem\footnote{Note that a better strategy is to choose the size of the data after observing the amount of harvested energy and the duration of IT period. However, this comes at the cost of conditioning on a two dimensional state which extremely complicates the solution that is not suitable for less capable WPDs.}. The unknown quantities in the optimization problem are $\mathds{E}(\chi_k)$, $k=1,\ldots,K$. We aim to learn these quantities using a reinforcement learning (RL) technique. The RL framework interacts with the environment and learns the values of the parameters of interest by observing the outcomes of its decisions. This problem can be efficiently formulated in the context of \emph{multi armed bandit} (MAB) problem. The parameters of interest in the MAB are denoted by $\theta_k=\mathds{E}(\chi_k) = \mathds{P}(\chi_k=1)$. We aim to efficiently infer each $\theta_k$ by interacting with the environment and observing the outcomes. In a MAB there are multiple arms (i.e., actions) each generating a random reward according to a probability distribution function (PDF). An agent sequentially chooses an action $x_t = k$ for $t=1,\ldots$ and readjusts it strategy by observing the reward with the hope of maximizing its expected reward. In our problem, there are $K$ actions. The WPD keeps  initial estimates of $\hat{\theta}_k$ about the unknown parameters $\theta_k$. The WPD chooses an action $x_t=k$ and observes the event  $Z(L_k)\cdot \chi_k$. Based on the observation, it updates $\hat{\theta}_k$ until the algorithm converges to the optimal value. The typical method for optimizing a MAB problem is by the well known $\epsilon$-greedy algorithm presented in Algorithm \ref{alg:greedy}. The $\epsilon$-greedy algorithm consists of two steps; exploration and exploitation. Exploration improves the estimate of non-greedy actions' values while exploitation is favorable when we reach a sufficient knowledge about the estimate of actions. $\epsilon$-greedy algorithm, with probability (w.p.) $1-\epsilon$, greedily chooses an action $k$ that maximizes $Z(L_k)\hat{\theta}_k$ and w.p. $\epsilon$ randomly chooses an action. In other words, w.p. $\epsilon$ the algorithm explores in the action space of the MAB while w.p. $1-\epsilon$ the algorithm exploits what it already knows. Although such an approach is guaranteed to approach the optimal performance \cite{sutton}, provided that $\epsilon$ is sufficiently small, the convergence rate of the algorithm is poor. This is because $\epsilon$-greedy algorithm does not judiciously explore in the parameter space. 
To speed up the convergence, we use a Bayesian inference method to judiciously explore in the action space of the MAB problem.  The augmentation of the Bayesian framework in MAB is known as Thompson sampling (TS) \cite{arxiv:TS}. To see how TS works, let us model the uncertainty $\theta_k$ by assuming a prior distribution for it.  Each $\theta_k$ is distributed according to a Beta distribution with parameters $a_k$ and $b_k$. In particular, for each arm $k$, the prior probability density function of $\theta_k$ is:
\begin{align}
\mathds{P}(\theta_k) = \frac{\Gamma(a_k+b_k)}{\Gamma(a_k)\Gamma(b_k)}\theta_k^{a_k-1}(1-\theta_k)^{b_k-1},
\end{align}
where $\Gamma(.)$ denotes the gamma function. The reason for choosing Beta as prior distribution is the conjugacy property of Beta distribution with Bernoulli distribution. In other words, if prior is Beta distributed and the likelihood is Bernoulli distributed, then the posterior distribution is also Beta distributed. This facilitates the process of sampling from the posterior distribution\footnote{Note that the conjugacy property only makes it easier to sample from the posterior distribution. In case where the posterior distribution does not admit any known PDF, efficient Monte-Carlo methods such as Markov chain Monte-Carlo (MCMC) \cite{MCMC} method and its variants such as Gibbs sampling can be used to efficiently sample from the posterior.}. Given a sample realization of $\chi_k$, we are interested in updating the posterior distribution of $\theta_k$. We have:
\begin{align}
\mathds{P}(\theta_k|\chi_k)&\propto \mathds{P}(\theta_k)\mathds{P}(\chi_k|\theta_k)\nonumber\\
&=\frac{\theta_k^{a_k-1}(1-\theta_k)^{b_k-1}}{B(a_k,b_k)}\theta_k^{\chi_k}(1-\theta_k)^{1-\chi_k}\nonumber\\
&\propto \theta_k^{a_k-1+\chi_k}(1-\theta_k)^{b_k-1+1-\chi_k}
\end{align}
Hence, the posterior distribution is also Beta distributed with parameters, $a_k + \mathds{1}_{\{\chi_k=1\}}$ and $b_k + \mathds{1}_{\{\chi_k=0\}}$. Note that at every given time, only a single observation regarding the chosen action is revealed. Hence, after retrieving the observation about an action, the parameters of the  posterior distribution is updated as:
\begin{align}
(a_k,b_k)\leftarrow\Bigg\{
\begin{array}{ll}
         (a_k,b_k) & \mbox{if $x_t \neq k$},\\
        (a_k + \chi_k,b_k+1-\chi_k) & \mbox{if $x_t = k$}.
        \end{array}\label{eq:update}
\end{align}

The TS algorithm is given in Algorithm \ref{alg:MAB_TS}.  Note that the only difference between the TS and $\epsilon$-greedy algorithms in the exploration phase of the problem. TS judiciously explores by modeling the uncertainty of each action using a distribution with decreasing variance in the number of observations explored. This prevents the TS from exploring the actions that are believed to be sub-optimal. Meanwhile $\epsilon$-greedy explores  the action space randomly, reducing the efficiency of the exploration phase.

\begin{algorithm}
\caption{$\epsilon$-greedy}\label{alg:greedy}
\begin{algorithmic}[1]
\For{$t = 1,2,\ldots$}
\State With probability $\epsilon$
\For{k=1,\ldots,K}
\State  $\hat{\theta}_k=\frac{a_k}{a_k + b_k}$
\EndFor
\State $
x_t\leftarrow\Bigg\{
\begin{array}{ll}
         \arg\max_k Z(L_k)\hat{\theta}_k & \mbox{with probability $1-\epsilon$},\\
        \mbox{choose a random action} & \mbox{with probability $\epsilon$} .
        \end{array}
$
\State Apply $x_t$ and observe $\chi_k$
\State update the posterior using \eqref{eq:update}
\EndFor
\end{algorithmic}
\end{algorithm}

\begin{algorithm}
\caption{TS}\label{alg:MAB_TS}
\begin{algorithmic}[1]
\For{$t = 1,2,\ldots$}
\State Sample from the posterior
\For{k=1,\ldots,K}
\State Sample $\hat{\theta}_k\sim beta(a_k,b_k)$
\EndFor
\State $x_t\leftarrow\arg\max_k Z(L_k)\hat{\theta}_k$
\State Apply $x_t$ and observe $\chi_k$
\State update the posterior using \eqref{eq:update}
\EndFor
\end{algorithmic}
\end{algorithm}

%%%%%%%%%%%%%%%%%%%%%%%%%%%%%%%%%%%%
\section{Numerical Results}
\label{numres}
In this section, we compare the performance of the optimal online policy with that of the offline as well as two benchmark policies, namely \emph{uniform} and \emph{power-halving} policies. In uniform policy, the amount of harvested energy is uniformly distributed in the IT period. Power-halving policy allocates half of its available energy in each time slot in the IT period. The WPT duration for both uniform and power-halving policy is optimized using exhaustive search method. We also evaluate the performance of TS algorithm in the utility maximization problem developed in Section \ref{sec:AL} and compare it with that of $\epsilon$-greedy.

For the channel state, we assume two different channel models based on Rayleigh and Gilbert-Elliot (G-E) fading models. For Rayleigh fading, we assume an average channel gain of $1$. For G-E model, we assume that there are two state; good and bad. The gain of good state is $1$ and that of bad state is $0$. The good and bad states occur with probability of $0.6$ and $0.4$, respectively. We assume that the AP transmits with power $P=20$dBm which is normalized with respect to distance and EH efficiency.  Time slot duration is $1$ms, the bandwidth is assumed to be $2$KHz, and the noise power density is $176$ dBm/Hz.

\subsection{Rate-Energy Trade-off}
We first evaluate the rate-energy trade-off of the online policy which is the expected total number of bits transmitted with respect to the amount of harvested energy in a finite duration of $T$. In Figure \ref{fig:r-e_N}, for different values of channel discretization level, $N$, and a frame length of $15$ time slots, the rate-energy trade-off is depicted. For different values of $T$, and $N = 15$, Figure \ref{fig:r-e_T}, illustrates the rate-energy trade-off. We observe from the figures that, spending too much time for transmitting more energy in the EH period reduces the time for IT period which in turn reduces the throughput. On the other hand, if we reduce the EH period, there would be less energy in the IT period resulting in a reduced throughput. Hence, an optimal balance is required.

\begin{figure}
        \centering
        \begin{subfigure}[h]{0.45\textwidth}
                \includegraphics[width=\textwidth]{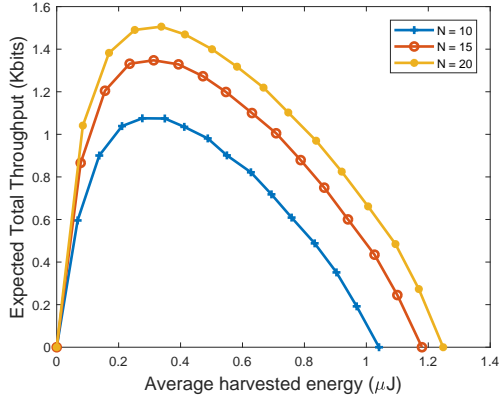}
                \caption{Expected throughput with respect to $N$.}
                \label{fig:r-e_N}
        \end{subfigure}
        \hfill
        \begin{subfigure}[h]{0.45\textwidth}
                \includegraphics[width=\textwidth]{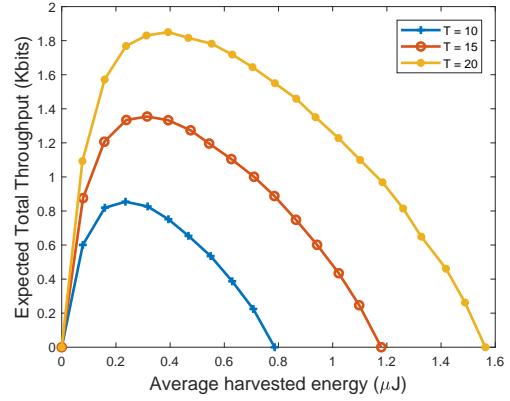}
                \caption{Expected throughput with respect to $T$. }
                \label{fig:r-e_T}
        \end{subfigure}
				\caption{The effect of channel discretization and deadline duration on the expected throughput.}
\label{R-E}
\end{figure}

\subsection{Performance Evaluation}
In Figure \ref{fig:rayl_N}, when the fading is Rayleigh, the expected total number of bits that are transmitted in $100$ time slots is depicted with respect to the number of channel discretization levels, $N$. We observe that as the number of channel levels increases, the discretization error decreases and hence the throughput of the all policies improve. The online policy achieves a throughput close to the upper-bound by optimally determining the WPT duration and power allocation in the IT period. Although the uniform and power-halving policies harvest energy for an optimum duration, they considerably perform poor due to the blind power allocation in the IT period.

\begin{figure}[ht]
  \centering
    \includegraphics[scale=.5]{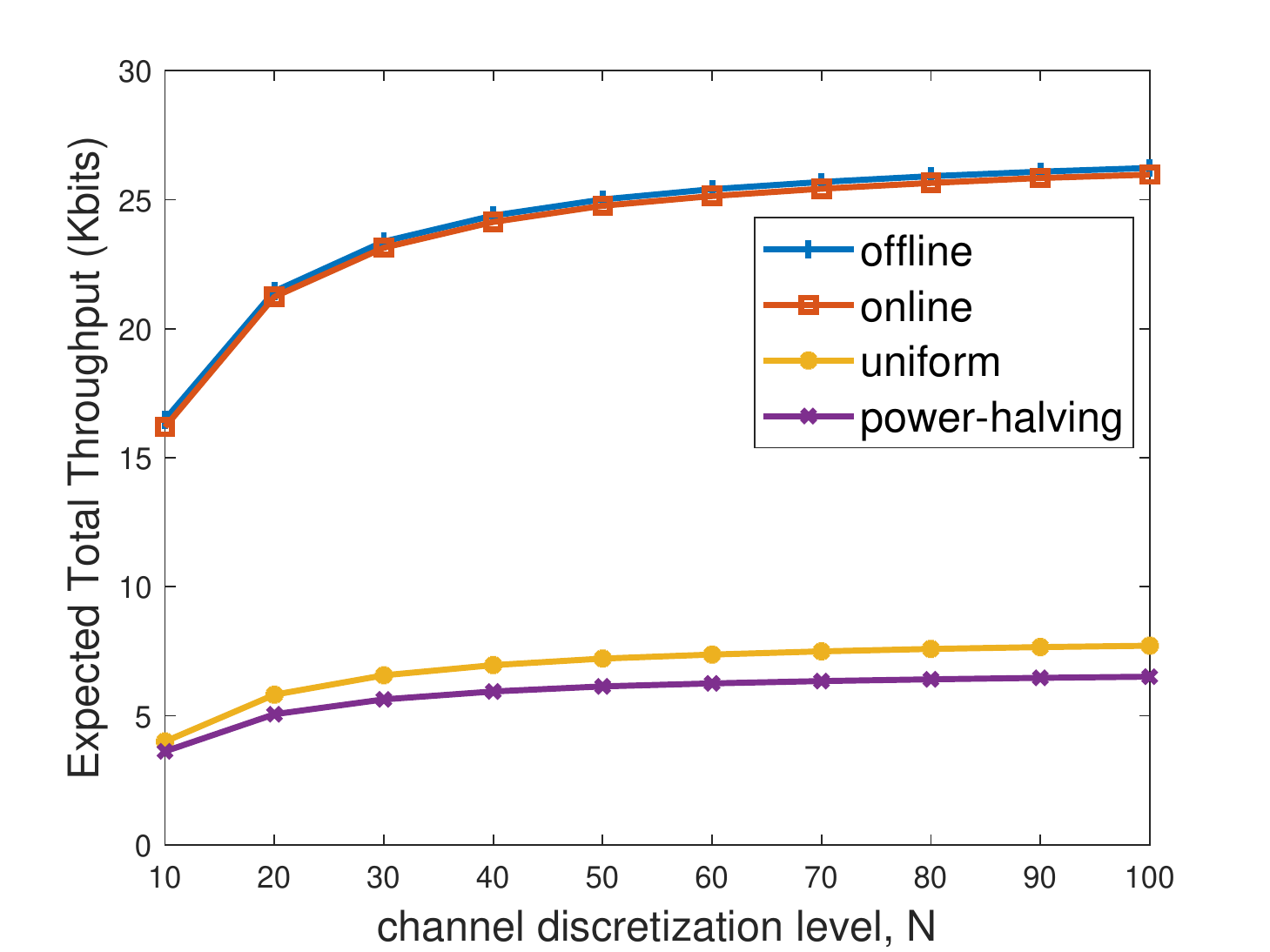}
		  \caption{Expected total throughput of the WPD with respect to the number of channel discretization levels in $T = 100$ time slots.}
			\label{fig:rayl_N}
\end{figure}
Next, we plot the expected total throughput of the WPD under Rayleigh and G-E fading models in Figure \ref{fig:rayl_T} and Figure \ref{fig:g-e_T}, respectively. Again, the online policy, for all values of $T$, achieves an outstanding performance compared to the offline policies. For smaller values of $T$, the power-halving policy achieves a good performance. However, as $T$ increases, due to the concave nature of the rate-power function, the power-halving strategy becomes significantly inefficient. On the other hand, uniform policy is able to perform better, for larger values of $T$, with respect to power-halving policy by allocating the harvested energy uniformly across the IT period.
\begin{figure}
        \centering
        \begin{subfigure}[h]{0.45\textwidth}
                \includegraphics[width=\textwidth]{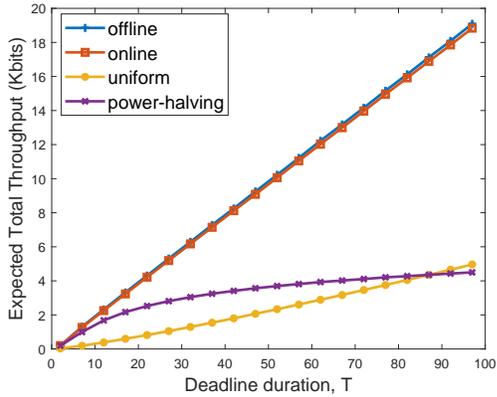}
                \caption{Expected total throughput with respect to $T$ under Rayleigh fading.}
                \label{fig:rayl_T}
        \end{subfigure}
        \hfill
        \begin{subfigure}[h]{0.45\textwidth}
                \includegraphics[width=\textwidth]{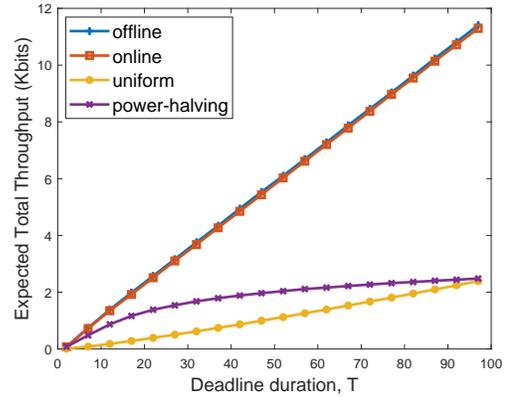}
                \caption{Expected total throughput with respect to $T$ under G-E model. }
                \label{fig:g-e_T}
        \end{subfigure}
				\caption{Expected total throughput of the WPD with $N = 20$ channel levels with respect to the frame length, $T$.}
\label{fig:TH_T}
\end{figure}
Finally, we illustrate the transmission rate of the WPD in units of bits per seconds (bits/sec) in Figure \ref{fig:rate_T}. IT can be seen from both Figure \ref{fig:rate_T_rayl} and Figure \ref{fig:rate_T_ge} that the online policy has a significantly higher rate than the uniform and power-halving policies. It is also evident that on the average, the online policy achieves a significantly good performance with respect to the offline policy.
\begin{figure}
        \centering
        \begin{subfigure}[h]{0.45\textwidth}
                \includegraphics[width=\textwidth]{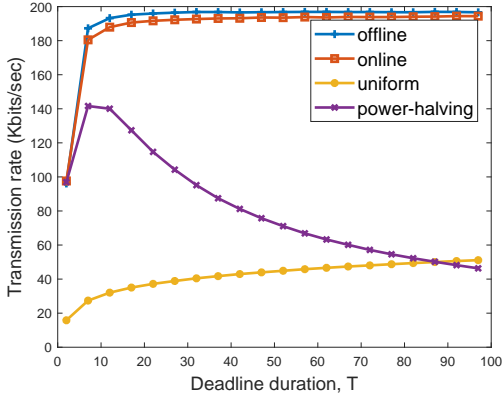}
                \caption{Expected transmission rate of the WPD with respect to $T$ under Rayleigh fading.}
                \label{fig:rate_T_rayl}
        \end{subfigure}
        \hfill
        \begin{subfigure}[h]{0.45\textwidth}
                \includegraphics[width=\textwidth]{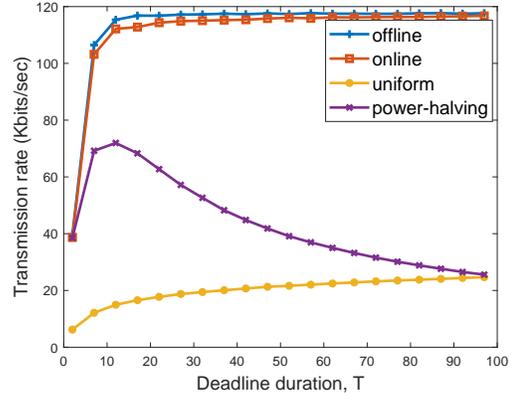}
                \caption{Expected transmission rate of the WPD with respect to $T$ under G-E model. }
                \label{fig:rate_T_ge}
        \end{subfigure}
				\caption{Expected transmission rate of the WPD with $N = 20$ channel levels with respect to the frame length, $T$.}
\label{fig:rate_T}
\end{figure}

\subsection{MAB}
Here, we evaluate the performance of TS and $\epsilon$-greedy algorithms and compare their performance. In Figure \ref{fig:MAB}, we plot the per-period regret of both algorithms. For plots, we use the following synthetic parameters; $T=15$, $N=30$, $L = 1000,\ 2500,\ 3000$ bits, $Z = 500,\ 700,\ 750$, and $\mathcal{E} =1,\ 3,\ 4$ $\mu$Joules. Per-period regret is the gap between the optimal utility and the utility achieved by the given algorithm. We obtain the value of the optimal utility by exhaustive search for comparison purposes only. Each point in Figure \ref{fig:MAB} is averaged over $10^5$ samples.

The greedy algorithm ($\epsilon=0$) has the worst performance as it does not explore at all. By giving non-zero values for $\epsilon$, we can see that $0.05$-greedy and $0.1$-greedy greatly improve upon the greedy algorithm by performing explorations. However, we see a poor performance regarding their convergence rate. TS improves the convergence rate significantly by simply adding intelligence to the exploration phase. This makes the TS algorithm to approach a per-period regret of $0$ considerably faster than the $\epsilon$-greedy algorithm.
\begin{figure}[ht]
  \centering
    \includegraphics[scale=.5]{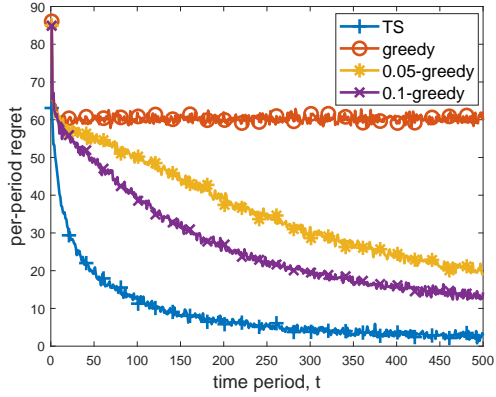}
		  \caption{Per-period regret comparison of TS and $\epsilon$-greedy algorithms for $\epsilon = 0,0.05,0.1$.}
			\label{fig:MAB}
\end{figure}

\section{Conclusions}

In this work, we studied the a WPCN scenario operating in a finite horizon. An AP transmits an RF signal to energize an WPD for a certain duration (WPT period) and then it stops sending energy and collects data from the WPD in the remainder of the horizon (IT period). The wireless channel varies randomly over the horizon and it is only available to the WPD causally and only in the IT period. We first derived an upper-bound for the performance of the network in an offline manner by assuming that the channel realizations are available non-causally at the AP.  We then studied the online counterpart of the problem by assuming that the channel realization are available only causally and in the IT period. We show that there exist a  time-dependent threshold on the energy level of the WPD in which it is optimal to stop WPT and start the IT period. Then, we show that the optimal power allocation in the IT period follows a fractional structure in which the WPD at each time slot allocates a fraction of its energy that depends on the immediate channel state as well as certain measure of future expectations. The numerical results show that the online policy achieves  a performance significantly close to the upper-bound. 
We then extended the model by embedding a MEC unit at the AP capable of performing heavy computational tasks. We formulated a Bayesian inference reinforcement learning problem to address the dependency of the application performance coupled with that of physical layer. We show that the Bayesian inference achieves a convergence rate that is much faster than that of the $\epsilon$-greedy algorithm. In the future, we aim to extend the results to the case of multiple WPDs.
\label{CON}
%In this work, we investigated the problem of optimal task offloading over a stochastic wireless channel when the deadline duration spans over multiple time slots with only causal CSI. We formulated the problem as a DP and by gaining insight into the DP, we reduced the dimension of the original from three to one enabling a closed form solution. By deriving closed form solutions for dynamic power allocation, and showing that the optimal stopping time for EH process follows a time varying threshold type structure, we developed a low complexity optimal task offloading algorithm, suitable for resource limited WPDs. We als show that it is possible to further reduce the computational time by approximating the optimal thresholds, while still achieving close to optimal results. Motivated by our findings for a single WPD, we developed a distributed algorithm where the AP utilizes a policy search method to learn when to stop the EH period and start the offloading process. We show that the distributed policy achieves a good performance as compared to a centralized policy. As future work, we will extend the results of the paper for the case of  multi-antenna APs and WPDs. Also, different performance metrics such as minimizing the task completion time and minimizing the power consumption of the AP will be addressed. 

\begin{appendices}
\section{Proof of Theorem \ref{off_optimal}}
\label{proof:offline}
Consider the following concave optimization of the  throughput at time $T-1$ and $T+1$, given that the amount of available energy at time $T-1$ is $E(T-1)$
\begin{align}
\max_{p(T-1),p(T)} & \sqrt[m]{\frac{g(T-1)p(T-1)}{\lambda}}+\sqrt[m]{\frac{g(T)(E(T-1)-p(T-1))}{\lambda}} \nonumber\\
&p(T-1)\leq E(T-1),\hspace{0.5cm}p(T)\leq E(T-1)-p(T-1).\nonumber
\end{align}
The WPD at the last time slot should utilize all the available energy before the transmission frame expires. Hence, we set $p(T) = E(T-1)-p(T-1)$. The optimization problem becomes
\begin{align}
\max_{p(T-1)} & \sqrt[m]{\frac{g(T-1)p(T-1)}{\lambda}}+\sqrt[m]{\frac{g(T)(E(T-1)-p(T-1))}{\lambda}} \nonumber\\
&0\leq p(T-1)\leq E(T-1).\nonumber
\end{align}
The Lagrangian of the above problem can be written as
\begin{align}
\mathcal{L}(p(T-1),\mu_1,\mu_2) = &\sqrt[m]{\frac{g(T-1)p(T-1)}{\lambda}}+\sqrt[m]{\frac{g(T)(E(T-1)-p(T-1))}{\lambda}}\nonumber\\
&-\mu_1(p(T-1)-E(T-1) + \mu_2p(T-1)\nonumber
\end{align}
The derivative of the Lagrangian is calculated as follows
\begin{align}
\frac{\partial \mathcal{L}(p(T-1),\mu_1,\mu_2)}{\partial p(T-1)} =&\frac{1}{m} \sqrt[m]{\frac{g(T-1)}{\lambda}}p(T-1)^{\frac{1}{m}-1}\nonumber\\
&-\frac{1}{m} \sqrt[m]{\frac{g(T)}{\lambda}}(E(T-1)-p(T-1))^{\frac{1}{m}-1}+(\mu_2-\mu_1)\nonumber
\end{align}
Prior to equating the Lagrangian to zero, we assume that the optimal power allocation satisfies the constraint, i.e., $0\leq p^*(T-1)\leq E(T-1)$, and set $\mu_1=\mu_2=0$. By solving the derivative of the relaxed Lagrangian, we get
\begin{align}
p^*(T-1)=\frac{g(T-1)^{\frac{1}{m-1}}}{g(T-1)^{\frac{1}{m-1}} + g(T)^{\frac{1}{m-1}}}E(T-1)\nonumber
\end{align}
Note that since $0\leq\frac{g(T-1)^{\frac{1}{m-1}}}{g(T-1)^{\frac{1}{m-1}} + g(T)^{\frac{1}{m-1}}}\leq 1$, the constraint is satisfied. Let us calculate the optimum sum throughput at time $T-1$ and $T$:

\begin{align}
r(T-1)+r(T)&=\sqrt[m]{\frac{g(T-1)p^*(T-1)}{\lambda}}+\sqrt[m]{\frac{g(T)(E(T-1)-p^*(T-1))}{\lambda}}\nonumber\\
&=\sqrt[m]{\frac{E(T-1)}{\lambda}}\Bigg[\sqrt[m]{\frac{g(T-1)g(T-1)^{\frac{1}{m-1}}}{g(T-1)^{\frac{1}{m-1}} + g(T)^{\frac{1}{m-1}}}} + \sqrt[m]{\frac{g(T)g(T)^{\frac{1}{m-1}}}{g(T-1)^{\frac{1}{m-1}} + g(T)^{\frac{1}{m-1}}}}\Bigg]\nonumber\\
&=\sqrt[m]{\frac{E(T-1)}{\lambda}}\frac{g(T-1)^{\frac{1}{m-1}} + g(T)^{\frac{1}{m-1}}}{\sqrt[m]{g(T-1)^{\frac{1}{m-1}} + g(T)^{\frac{1}{m-1}}}}\nonumber\\
&=\sqrt[m]{\frac{E(T-1)G(T-1)}{\lambda}},\nonumber
\end{align}
where $G(T-1) =\big[g(T-1)^{\frac{1}{m-1}} + g(T)^{\frac{1}{m-1}}\big]^{m-1}$. To generalize the results, we use induction. Suppose that the above results are true for some time $t+1$. Next consider the optimization of sum throughput from time $t$ to $T$:
\begin{align}
\max_{p(t)} & \sqrt[m]{\frac{g(t)p(t)}{\lambda}}+\sqrt[m]{\frac{(E(t)-p(t))G(T-1)}{\lambda}} \nonumber
\end{align}
Similar to the above analysis, it follows that
\begin{align}
p^*(t)=\frac{g(t)^{\frac{1}{m-1}}}{g(t)^{\frac{1}{m-1}} + G(t+1)^{\frac{1}{m-1}}}E(t)\nonumber\\
\sum_{\tau=t}^{T}r(\tau) = \sqrt[m]{\frac{E(t)G(t)}{\lambda}},\nonumber
\end{align}
where $G(t) =\big[g(t)^{\frac{1}{m-1}} + G(t+1)^{\frac{1}{m-1}}\big]^{m-1}$

\section{Proof of Theorem \ref{optimal}}
\label{proof:optimal}
The proof is by induction. We have shown in (\ref{alfa1}), (\ref{Q1}), and (\ref{V1}), that the case for $k=1$ is true. By assuming the the case for $k-1$ is true, let us calculate the case $k$. The value function is given as
\begin{align}
V_{\alpha}(E(T-k),g(T-k)) =& U_{\alpha}(E(T-k),g(T-k)) + \sum q_i V(E(T-(k-1)),g_i)\label{V_a_T_k}
\end{align}
Note that $E(T-(k-1)) = (1-\alpha(T-k))E(T-k)$ and since the case is true for $k-1$, from (\ref{V_k}), we have 
\begin{align}
&V(E(T-(k-1)),g_i)= \sqrt[m]{\frac{(1-\alpha(T-k))E(T-k)}{\lambda}}\big(g_i^{\frac{1}{m-1}} + Q(T-k+1)^{\frac{m}{m-1}}\big)^\frac{m-1}{m} \label{V_T_k_1}
\end{align}
By substituting (\ref{V_T_k_1}) in (\ref{V_a_T_k}) we get
\begin{align}
V_{\alpha}(E(T-k),g(T-k))& = \sqrt[m]{\frac{g(T-k)\alpha(T-k) E(T-k)}{\lambda}}\nonumber\\
&+\sum q_i \sqrt[m]{\frac{(1-\alpha(T-k))E(T-k)}{\lambda}}\times\big(g_i^{\frac{1}{m-1}} + Q(T-k+1)^{\frac{m}{m-1}}\big)^\frac{m-1}{m}
\end{align}
By differentiating with respect to $\alpha(T-k)$ and equating to zero, we obtain:
\begin{align}
\alpha^*(T-k) = &\frac{g(T-k)^\frac{1}{m-1}}{g(T-k)^\frac{1}{m-1} + Q(T-k)^\frac{m}{m-1}},
\end{align}
where 
\begin{align}
Q(T-k) = \sum q_i\big(g_i^{\frac{1}{m-1}} + Q(T-k+1)^{\frac{m}{m-1}}\big)^\frac{m-1}{m}
\end{align}
Hence, (\ref{alfa_k}) and $(\ref{Q_k})$ hold by induction. For the last part let us calculate $V(E(T-k),g(T-k))$ 
\begin{align}
V(E(T-k),g(T-k)) &= \sqrt[m]{\frac{g(T-k)g(T-k)^\frac{1}{m-1}E(T-k)}{\lambda(g(T-k)^\frac{1}{m-1} + Q(T-k)^\frac{m}{m-1})}}\nonumber\\
&+\sum q_i \sqrt[m]{\frac{Q(T-k)^\frac{m}{m-1}E(T-k)}{\lambda(g(T-k)^\frac{1}{m-1} + Q(T-k)^\frac{m}{m-1})}}\big(g_i^{\frac{1}{m-1}} + Q(T-k+1)^{\frac{m}{m-1}}\big)^\frac{m-1}{m}\nonumber\\
&=\sqrt[m]{\frac{E(T-k)}{\lambda(g(T-k)^\frac{1}{m-1} + Q(T-k)^\frac{m}{m-1})}}(g(T-k)^\frac{1}{m-1}+Q(T-k)^\frac{m}{m-1})\nonumber\\
&=\sqrt[m]{\frac{E(T-k)}{\lambda}}\big(g(T-k)^{\frac{1}{m-1}} + Q(T-k )^{\frac{m}{m-1}}\big)^\frac{m-1}{m}.
\end{align}
Thus, (\ref{V_k}) also holds by induction.
\section{Proof of Theorem \ref{theorem:threshold}}
\label{proof_theorem:threshold}
The proof is by induction. We will show that the result of the theorem is true for $J_{t}(E(t))$ for all $t=1,\ldots,T-1$. The result of the theorem is verified for $t=T-1$ in (\ref{gamma_T_1}). Let us assume that the theorem holds for $t+1$, i.e., if $E(t+1)\geq\gamma(t+1)$, it is optimal to stop the EH process, where $\gamma(t+1)$ is the solution to the following equation,
\begin{align}
\sum q_i \sqrt[m]{1+\frac{e_i}{\gamma(t+1)}} = \frac{Q(t)}{Q(t+1)}
\end{align}
At time slot $t$ we have:
\begin{align}
J_{t}(E(t)) = \max\Bigg(\sqrt[m]{\frac{E(t)}{\lambda}} Q(t-1),\mathbb{E}(J_{t+1}(E(t+1))|E(t)\Bigg)
\end{align}
First, let us assume that $E(t)\geq \gamma(t+1)$. Since $E(t+1)\geq E(t)$, it readily follows that $E(t+1)\geq \gamma(t+1)$. Thus, we have
\begin{align}
\mathbb{E}(J_{t+1}(E(t+1))|E(t))=\sum q_i\sqrt[m]{\frac{E(t)+e_i}{\lambda}} Q(t))\label{J_T_K}
\end{align}
Hence,
\begin{align}
J_{t}(E(t)) = \max\Bigg(\sqrt[m]{\frac{E(t)}{\lambda}} Q(t-1),\sum q_i\sqrt[m]{\frac{E(t)+e_i}{\lambda}} Q(t)\Bigg)
\end{align}
Since, $Q(t-1)>Q(t)$, if $E(t)\geq \gamma(t)$, then it is optimal to stop the EH process, and $\gamma(t)$ is the solution of,
\begin{align}
\sum q_i \sqrt[m]{1+\frac{e_i}{\gamma(t)}} = \frac{Q(t-1)}{Q(t)}.\label{gama_app}
\end{align}
Note that the left hand side of (\ref{gama_app}) is strictly decreasing with respect to $\gamma(t)$ and its range is $(1\  \infty)$. Since $\frac{Q(t-1)}{Q(t)}>1$ is proved in Lemma \ref{lemma:Q_decreasing}, there is a unique solution for $\gamma(t)$ satisfying (\ref{gama_app}).
Thus, if $E(t)\geq\gamma(t+1)$, then the theorem is also true for case $k$. In the following, we will generalize the proof for any value of $E(t)$. Note that if $\gamma(t)>\gamma(t+1)$, then the proof will include any $E(t)$. Because, if $E(t)\geq \gamma(t)$, then,
\begin{align}
E(t+1)\geq E(t)\geq \gamma(t)>\gamma(t+1),
\end{align}
and (\ref{J_T_K}) will hold. Using the results of Lemma \ref{lemma:Qfrac_decreasing} we have
\begin{align}
\sum q_i \sqrt[m]{1+\frac{e_i}{\gamma(t)}} < \sum q_i \sqrt[m]{1+\frac{e_i}{\gamma(t+1)}}
\end{align}
Hence, $\gamma(t)>\gamma(t+1)$, and the theorem holds.

\end{appendices}
\bibliographystyle{IEEEtran}
\bibliography{ref}

% Generated by IEEEtran.bst, version: 1.13 (2008/09/30)
\begin{thebibliography}{10}
\providecommand{\url}[1]{#1}
\csname url@samestyle\endcsname
\providecommand{\newblock}{\relax}
\providecommand{\bibinfo}[2]{#2}
\providecommand{\BIBentrySTDinterwordspacing}{\spaceskip=0pt\relax}
\providecommand{\BIBentryALTinterwordstretchfactor}{4}
\providecommand{\BIBentryALTinterwordspacing}{\spaceskip=\fontdimen2\font plus
\BIBentryALTinterwordstretchfactor\fontdimen3\font minus
  \fontdimen4\font\relax}
\providecommand{\BIBforeignlanguage}[2]{{%
\expandafter\ifx\csname l@#1\endcsname\relax
\typeout{** WARNING: IEEEtran.bst: No hyphenation pattern has been}%
\typeout{** loaded for the language `#1'. Using the pattern for}%
\typeout{** the default language instead.}%
\else
\language=\csname l@#1\endcsname
\fi
#2}}
\providecommand{\BIBdecl}{\relax}
\BIBdecl

\bibitem{6951347}
X.~Lu, P.~Wang, D.~Niyato, D.~I. Kim, and Z.~Han, ``Wireless networks with rf
  energy harvesting: A contemporary survey,'' \emph{IEEE Communications Surveys
  Tutorials}, vol.~17, no.~2, pp. 757--789, Secondquarter 2015.

\bibitem{6678102}
H.~Ju and R.~Zhang, ``Throughput maximization in wireless powered communication
  networks,'' \emph{IEEE Transactions on Wireless Communications}, vol.~13,
  no.~1, pp. 418--428, January 2014.

\bibitem{8048675}
X.~Di, K.~Xiong, P.~Fan, H.~Yang, and K.~B. Letaief, ``Optimal resource
  allocation in wireless powered communication networks with user
  cooperation,'' \emph{IEEE Transactions on Wireless Communications}, vol.~16,
  no.~12, pp. 7936--7949, Dec 2017.

\bibitem{7866871}
D.~Xu and Q.~Li, ``Joint power control and time allocation for wireless powered
  underlay cognitive radio networks,'' \emph{IEEE Wireless Communications
  Letters}, vol.~6, no.~3, pp. 294--297, June 2017.

\bibitem{7676282}
A.~Biason and M.~Zorzi, ``Battery-powered devices in wpcns,'' \emph{IEEE
  Transactions on Communications}, vol.~65, no.~1, pp. 216--229, Jan 2017.

\bibitem{7492928}
M.~A. Abd-Elmagid, A.~Biason, T.~ElBatt, K.~G. Seddik, and M.~Zorzi, ``On
  optimal policies in full-duplex wireless powered communication networks,'' in
  \emph{2016 14th International Symposium on Modeling and Optimization in
  Mobile, Ad Hoc, and Wireless Networks (WiOpt)}, May 2016, pp. 1--7.

\bibitem{7996351}
{M. A. Abd-Elmagid and A. Biason and T. ElBatt and K. G. Seddik and M. Zorzi},
  ``Non-orthogonal multiple access schemes in wireless powered communication
  networks,'' in \emph{2017 IEEE International Conference on Communications
  (ICC)}, May 2017, pp. 1--6.

\bibitem{7417596}
M.~A. Abd-Elmagid, T.~ElBatt, and K.~G. Seddik, ``Optimization of wireless
  powered communication networks with heterogeneous nodes,'' in \emph{2015 IEEE
  Global Communications Conference (GLOBECOM)}, Dec 2015, pp. 1--7.

\bibitem{7100855}
C.~Zhong, G.~Zheng, Z.~Zhang, and G.~K. Karagiannidis, ``Optimum wirelessly
  powered relaying,'' \emph{IEEE Signal Processing Letters}, vol.~22, no.~10,
  pp. 1728--1732, Oct 2015.

\bibitem{7018086}
H.~Chen, Y.~Li, J.~L. Rebelatto, B.~F. Uchôa-Filho, and B.~Vucetic,
  ``Harvest-then-cooperate: Wireless-powered cooperative communications,''
  \emph{IEEE Transactions on Signal Processing}, vol.~63, no.~7, pp.
  1700--1711, April 2015.

\bibitem{7248986}
Y.~Gu, H.~Chen, Y.~Li, and B.~Vucetic, ``An adaptive transmission protocol for
  wireless-powered cooperative communications,'' in \emph{2015 IEEE
  International Conference on Communications (ICC)}, June 2015, pp. 4223--4228.

\bibitem{7037009}
H.~Ju and R.~Zhang, ``User cooperation in wireless powered communication
  networks,'' in \emph{2014 IEEE Global Communications Conference}, Dec 2014,
  pp. 1430--1435.

\bibitem{7500446}
M.~Zhong, S.~Bi, and X.~Lin, ``User cooperation for enhanced throughput
  fairness in wireless powered communication networks,'' in \emph{2016 23rd
  International Conference on Telecommunications (ICT)}, May 2016, pp. 1--6.

\bibitem{1708-08810}
\BIBentryALTinterwordspacing
S.~Bi and Y.~J. Zhang, ``Computation rate maximization for wireless powered
  mobile-edge computing with binary computation offloading,'' \emph{CoRR}, vol.
  abs/1708.08810, 2017. [Online]. Available:
  \url{http://arxiv.org/abs/1708.08810}
\BIBentrySTDinterwordspacing

\bibitem{7417552}
C.~You and K.~Huang, ``Wirelessly powered mobile computation offloading: Energy
  savings maximization,'' in \emph{2015 IEEE Global Communications Conference
  (GLOBECOM)}, Dec 2015, pp. 1--6.

\bibitem{8234686}
F.~Wang, J.~Xu, X.~Wang, and S.~Cui, ``Joint offloading and computing
  optimization in wireless powered mobile-edge computing systems,'' \emph{IEEE
  Transactions on Wireless Communications}, vol.~PP, no.~99, pp. 1--1, 2017.

\bibitem{7997477}
{F. Wang and J. Xu and X. Wang and S. Cui}, ``Joint offloading and computing
  optimization in wireless powered mobile-edge computing systems,'' in
  \emph{2017 IEEE International Conference on Communications (ICC)}, May 2017,
  pp. 1--6.

\bibitem{5992841}
O.~Ozel, K.~Tutuncuoglu, J.~Yang, S.~Ulukus, and A.~Yener, ``Transmission with
  energy harvesting nodes in fading wireless channels: Optimal policies,''
  \emph{IEEE Journal on Selected Areas in Communications}, vol.~29, no.~8, pp.
  1732--1743, September 2011.

\bibitem{6094139}
J.~Yang and S.~Ulukus, ``Optimal packet scheduling in an energy harvesting
  communication system,'' \emph{IEEE Transactions on Communications}, vol.~60,
  no.~1, pp. 220--230, January 2012.

\bibitem{6897968}
Z.~Wang, V.~Aggarwal, and X.~Wang, ``Power allocation for energy harvesting
  transmitter with causal information,'' \emph{IEEE Transactions on
  Communications}, vol.~62, no.~11, pp. 4080--4093, Nov 2014.

\bibitem{7008488}
M.~L. Ku, Y.~Chen, and K.~J.~R. Liu, ``Data-driven stochastic models and
  policies for energy harvesting sensor communications,'' \emph{IEEE Journal on
  Selected Areas in Communications}, vol.~33, no.~8, pp. 1505--1520, Aug 2015.

\bibitem{7032105}
R.~Ma and W.~Zhang, ``Optimal power allocation for energy harvesting
  communications with limited channel feedback,'' in \emph{2014 IEEE Global
  Conference on Signal and Information Processing (GlobalSIP)}, Dec 2014, pp.
  193--197.

\bibitem{7865904}
M.~R. Zenaidi, Z.~Rezki, and M.~S. Alouini, ``Performance limits of online
  energy harvesting communications with noisy channel state information at the
  transmitter,'' \emph{IEEE Access}, vol.~5, pp. 1239--1249, 2017.

\bibitem{7736112}
W.~Du, J.~C. Liando, H.~Zhang, and M.~Li, ``Pando: Fountain-enabled fast data
  dissemination with constructive interference,'' \emph{IEEE/ACM Transactions
  on Networking}, vol.~25, no.~2, pp. 820--833, April 2017.

\bibitem{7959595}
B.~T. Bacinoglu, E.~Uysal-Biyikoglu, and C.~E. Koksal, ``Finite-horizon
  energy-efficient scheduling with energy harvesting transmitters over fading
  channels,'' \emph{IEEE Transactions on Wireless Communications}, vol.~16,
  no.~9, pp. 6105--6118, Sept 2017.

\bibitem{sutton}
\BIBentryALTinterwordspacing
R.~S. Sutton and A.~G. Barto, \emph{Reinforcement Learning: An
  Introduction}.\hskip 1em plus 0.5em minus 0.4em\relax Cambridge, MA, USA: MIT
  Press, 1998. [Online]. Available:
  \url{http://www.cs.ualberta.ca/%7Esutton/book/ebook/the-book.html}
\BIBentrySTDinterwordspacing

\bibitem{6574874}
W.~Zhang, Y.~Wen, K.~Guan, D.~Kilper, H.~Luo, and D.~O. Wu, ``Energy-optimal
  mobile cloud computing under stochastic wireless channel,'' \emph{IEEE
  Transactions on Wireless Communications}, vol.~12, no.~9, pp. 4569--4581,
  September 2013.

\bibitem{7880703}
S.~Ko, K.~Huang, S.~Kim, and H.~Chae, ``Live prefetching for mobile computation
  offloading,'' \emph{IEEE Transactions on Wireless Communications}, vol.~16,
  no.~5, pp. 3057--3071, May 2017.

\bibitem{4674675}
J.~Lee and N.~Jindal, ``Energy-efficient scheduling of delay constrained
  traffic over fading channels,'' \emph{IEEE Transactions on Wireless
  Communications}, vol.~8, no.~4, pp. 1866--1875, April 2009.

\bibitem{4357594}
M.~Zafer and E.~Modiano, ``Delay-constrained energy efficient data transmission
  over a wireless fading channel,'' in \emph{2007 Information Theory and
  Applications Workshop}, Jan 2007, pp. 289--298.

\bibitem{8403495}
C.~You, Y.~Zeng, R.~Zhang, and K.~Huang, ``Resource management for asynchronous
  mobile-edge computation offloading,'' in \emph{2018 IEEE International
  Conference on Communications Workshops (ICC Workshops)}, May 2018, pp. 1--6.

\bibitem{1801-03668}
\BIBentryALTinterwordspacing
{Changsheng You and Yong Zeng and Rui Zhang and Kaibin Huang}, ``Asynchronous
  mobile-edge computation offloading: Energy-efficient resource management,''
  \emph{CoRR}, vol. abs/1801.03668, 2018. [Online]. Available:
  \url{http://arxiv.org/abs/1801.03668}
\BIBentrySTDinterwordspacing

\bibitem{wearable}
M.~R.~N. Edward~Sazonov, \emph{Wearable Sensors: Fundamentals, Implementation
  and Applications}, 1st~ed.\hskip 1em plus 0.5em minus 0.4em\relax Academic
  Press, 2014.

\bibitem{ishadow}
\BIBentryALTinterwordspacing
A.~Mayberry, P.~Hu, B.~Marlin, C.~Salthouse, and D.~Ganesan, ``ishadow: Design
  of a wearable, real-time mobile gaze tracker,'' in \emph{Proceedings of the
  12th Annual International Conference on Mobile Systems, Applications, and
  Services}, ser. MobiSys '14.\hskip 1em plus 0.5em minus 0.4em\relax New York,
  NY, USA: ACM, 2014, pp. 82--94. [Online]. Available:
  \url{http://doi.acm.org/10.1145/2594368.2594388}
\BIBentrySTDinterwordspacing

\bibitem{Rostami}
\BIBentryALTinterwordspacing
M.~Rostami, J.~Gummeson, A.~Kiaghadi, and D.~Ganesan, ``Polymorphic radios: A
  new design paradigm for ultra-low power communication,'' in \emph{Proceedings
  of the 2018 Conference of the ACM Special Interest Group on Data
  Communication}, ser. SIGCOMM '18.\hskip 1em plus 0.5em minus 0.4em\relax New
  York, NY, USA: ACM, 2018, pp. 446--460. [Online]. Available:
  \url{http://doi.acm.org/10.1145/3230543.3230571}
\BIBentrySTDinterwordspacing

\bibitem{arxiv:TS}
\BIBentryALTinterwordspacing
D.~Russo, B.~V. Roy, A.~Kazerouni, and I.~Osband, ``A tutorial on thompson
  sampling,'' \emph{CoRR}, vol. abs/1707.02038, 2017. [Online]. Available:
  \url{http://arxiv.org/abs/1707.02038}
\BIBentrySTDinterwordspacing

\bibitem{MCMC}
G.~C. Christian~Robert, \emph{Monte Carlo Statistical Methods}, 2nd~ed., ser.
  Springer Texts in Statistics.\hskip 1em plus 0.5em minus 0.4em\relax
  Springer, 2004.

\end{thebibliography}

\end{document}